\pgfplotsset{compat=newest}
\DeclareMathAlphabet{\mymathbb}{U}{BOONDOX-ds}{m}{n}
\newcommand{\bb}[1]{\mymathbb{#1}}
\newcommand{\mc}[1]{\mathcal{#1}}
\newcommand{\mt}[1]{\textnormal{#1}}
\newcommand{\lt}{\left}
\newcommand{\rt}{\right}
\newcommand{\beeq}{\begin{equation}}
\newcommand{\eneq}{\end{equation}}
\newcommand{\matb}{\begin{matrix}}
\newcommand{\mate}{\end{matrix}}
\newcounter{asses}
\newtheorem{assumption}[asses]{Assumption}
\newcounter{rems}
\newtheorem{remark}[rems]{Remark}
\newcounter{thms}
\newtheorem{lemma}[thms]{Lemma}
\let\old@ps@headings\ps@headings
\let\old@ps@IEEEtitlepagestyle\ps@IEEEtitlepagestyle
\def\psccfooter#1{%
    \def\ps@headings{%
        \old@ps@headings%
        \def\@oddfoot{\strut\hfill#1\hfill\strut}%
        \def\@evenfoot{\strut\hfill#1\hfill\strut}%
    }%
    \def\ps@IEEEtitlepagestyle{%
        \old@ps@IEEEtitlepagestyle%
        \def\@oddfoot{\strut\hfill#1\hfill\strut}%
        \def\@evenfoot{\strut\hfill#1\hfill\strut}%
    }%
    \ps@headings%
}
        \parbox{\textwidth}{\hrulefill \\ \small{23rd Power Systems Computation Conference} \hfill \begin{minipage}{0.2\textwidth}\centering \vspace*{4pt} \includegraphics[scale=0.06]{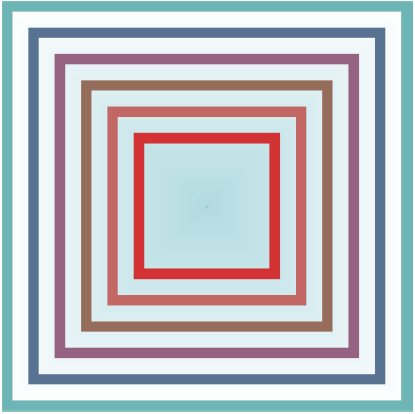}\\\small{PSCC 2024} \end{minipage} \hfill \small{Paris, France --- June 4 -- 7, 2024}}%
\begin{document}
%
% paper title
% Titles are generally capitalized except for words such as a, an, and, as,
% at, but, by, for, in, nor, of, on, or, the, to and up, which are usually
% not capitalized unless they are the first or last word of the title.
% Linebreaks \\ can be used within to get better formatting as desired.
% Do not put math or special symbols in the title.
\title{Power Grid Parameter Estimation Without Phase Measurements:\,Theory and Empirical Validation}
% \title{Statistical Limits in Power Grid Identification Without Phase Measurements:\,Theory and Empirical Validation}

%% To specify the authors when (number of affiliations <= 2)
\author{
\IEEEauthorblockN{Jean-S\'ebastien Brouillon\\Giancarlo Ferrari Trecate}
\IEEEauthorblockA{Institute of Mechanical Engineering\\
Ecole Polytechnique F\'ed\'erale de Lausanne\\
Lausanne, Switzerland\\
\{jean-sebastien.brouillon, giancarlo.ferraritrecate\}@epfl.ch}
\and
\IEEEauthorblockN{Keith Moffat\\Florian D\"orfler}
\IEEEauthorblockA{Department of Information Technology and Electrical Engineering\\
ETH Z\"urich\\
Z\"urich, Switzerland\\
\{kmoffat, dorfler\}@ethz.ch}
}

%% To specify the authors when (number of affiliations > 2)
% \author{\IEEEauthorblockN{Author n.1\IEEEauthorrefmark{1},
% Author n.2\IEEEauthorrefmark{2},
% Author n.3\IEEEauthorrefmark{3}, 
% Author n.4\IEEEauthorrefmark{3} and
% Author n.5\IEEEauthorrefmark{4}}
% \IEEEauthorblockA{\IEEEauthorrefmark{1} Department Name of Organization A\\
% Name of the organization A,
% Address A\\ Emails if wanted}
% \IEEEauthorblockA{\IEEEauthorrefmark{2} Department Name of Organization B\\
% Name of the organization B,
% Address B\\ Emails if wanted}
% \IEEEauthorblockA{\IEEEauthorrefmark{3} Department Name of Organization C\\
% Name of the organization C,
% Address C\\ Emails if wanted}
% \IEEEauthorblockA{\IEEEauthorrefmark{4}Department Name of Organization D\\
% Name of the organization D,
% Address D\\ Emails if wanted}
% }

% make the title area
\maketitle

% As a general rule, do not put math, special symbols or citations
% in the abstract / 150 words
\begin{abstract}
Reliable integration and operation of renewable distributed energy resources requires accurate distribution grid models. However, obtaining precise models is often prohibitively expensive, given their large scale and the ongoing nature of grid operations. To address this challenge, considerable efforts have been devoted to harnessing abundant consumption data for automatic model inference. The primary result of the paper is that, while the impedance of a line or a network can be estimated without synchronized phase angle measurements in a consistent way, the admittance cannot. Furthermore, a detailed statistical analysis is presented, quantifying the expected estimation errors of four prevalent admittance estimation methods. Such errors constitute fundamental model inference limitations that cannot be resolved with more data. These findings are empirically validated using synthetic data and real measurements from the town of Walenstadt, Switzerland, confirming the theory. The results contribute to our understanding of grid estimation limitations and uncertainties, offering guidance for both practitioners and researchers in the pursuit of more reliable and cost-effective solutions.
\end{abstract}

\begin{IEEEkeywords}
Distribution Grid, Parameter estimation, Smart meters, Network identification
\end{IEEEkeywords}

% Use this to place sponsorships
\thanksto{\noindent This research is supported by the Swiss National Science Foundation under the NCCR Automation (grant agreement 51NF40\_180545).}

\section{Introduction}
%DSO, TLS, EIV, SNR, DG, OLS

The deployment of sensors and machine learning techniques on power grids has opened a new set of power system monitoring applications. Among them, line parameter and topology estimation may play a crucial role for deploying smart energy resources \cite{IEEE_QER, OE_RD_Factsheet, NY_challenges_state_estim, Schenato, Iovine2019, aer2023flexible, liu2023project}. This is specially needed at the distribution level, as Distribution System Operators (DSOs) often lack accurate models of their Distribution Grids (DGs)\footnote{We focus on distribution network applications in this paper, however the analysis applies to transmission networks as well.}. Many recent studies have shown that physical models of a distribution grid can be estimated from synchronized voltage and current measurements \cite{tomlin, online_rls, cavraro_pmu, wehenkel2020parameter, moffat_2019}. However, micro synchrophasor measurement units ($\mu$PMUs) remain expensive and therefore mostly absent in DGs.

Sensor costs motivate grid analysis that does not rely on accurate phase angle measurements, as voltage and current magnitude sensors are considerably cheaper than $\mu$PMU sensors. Smart meter sensors \cite{smart_meters_characteristics} are becoming ubiquitous for measuring power consumption. The load—coverage in European DGs is already above 70\% \cite{smart_meter_rollout}.
Thus, the question whether it is possible to accurately estimate grid impedance or admittance without synchronized phase angle measurements is of practical importance. This paper investigates the feasibility of impedance and admittance estimation from voltage and current magnitude and power angle measurements only.
%\footnote{The voltage and current magnitude measurements are synchronized in time between different locations, but not with sufficient accuracy to provide phase angle estimates.}.

Given that phase angles are typically small in DGs, one might consider applying the methods used for synchronized measurements with a zero angle. However, recent research demonstrates that such an approach results in an inconsistent estimate \cite{lisa2023}, even when employing Error-in-Variables (EIV) methods, e.g., the Total Least Squares (TLS). 
% In other words, the error does not decrease as the sample size increases. 
Consistency is critical when estimating DG parameters because the stability of the voltage operating point leads to a low Signal-to-Noise Ratio (SNR), which can only be mitigated by collecting a large volume of data.

%The authors from \cite{zhang_non_pmu_newton} first attempted to improve the estimation accuracy by estimating the phase angles jointly with the parameters, but could not obtain consistency guarantees. A similar method is used in \cite{mixed_pmu_sm_newton} in combination with some synchronized measurements to reduce the inconsistency, while not suppressing it totally. More recently, \cite{zhang_non_pmu_id} reduced the phase out of the current flow equations, which would make the estimation of the X/R ratios and some transformations of the conductance and susceptance consistent under i.i.d. Gaussian noise. Alternatively, other studies including \cite{line_length_vanin} opted for estimating the impedance rather than the admittance.

Prior research has improved the accuracy of power system parameter estimation from smart meter measurements. The authors of \cite{zhang_non_pmu_newton} attempted to simultaneously estimating phase angles and parameters, which reduced the estimation error in some cases but did not provide consistency guarantees. A similar approach is employed in \cite{mixed_pmu_sm_newton}, where $\mu$PMUs are added at some nodes to mitigate inconsistencies, albeit not eliminating them entirely. A more recent development \cite{zhang_non_pmu_id} focuses on canceling the phase variable out of the current flow equations. This reduction yields consistent estimates of X/R ratios and certain transformations of conductance and susceptance, but not of these quantities themselves. Alternatively, some studies such as \cite{line_length_vanin} have chosen to estimate impedance rather than admittance, but inverting the impedance estimates also results in accuracy issues \cite{moffat_2019}.
% \cite{wang2020assessing}

% In this paper, we quantify and validate the estimation biases, i.e. the expected estimation errors that cannot be resolved with more data, of grid parameter estimation from smart meter data. 
% We first investigate the biases of impedance estimation for a single line, demonstrating that line impedances can be estimated without bias using TLS.
% Second, we demonstrate that the prevalent line admittance estimation methods all contain bias, and quantify the bias.
% Third, we extend the results to full-network estimation, showing that the intuition from the single-line estimation holds.
% Finally, we conduct experiments using both synthetic and real DG data, validating our findings and underscoring the practical challenges of the problem.

In this paper, we aim to enhance our understanding of existing estimation methods by quantifying their inherent biases, i.e., their expected estimation errors. Our contribution can be summarized in three main aspects. We first present a detailed statistical model for common DG sensors and utilize it to express the biases associated with each estimation method in a single-line identification setup. Notably, we reveal that while each method effectively addresses certain biases from previous approaches, no admittance estimation method achieves a perfect bias cancellation, for which synchronized phasor measurements are required. Second, we show how the same reasoning can be applied to admittance matrix estimation problems when the topology is unknown. Third, to validate our findings and underscore the practical challenges of the problem, we conduct an experiment using real DG data, thereby demonstrating the applicability of the methods in the real world. % and relevance of our research.

% This paper answers this question using a statistical approach and verifies the analysis with experiments on a part of the 16kV distribution grid in a Swiss village.
% We first focus on the question of whether it is better to estimate impedance or admittance.
% In order for an estimator to be consistent, it must be bias-free. 
% We propose a line impedance estimator and find that, apart from the linearization approximations, it is unbiased, whereas the admittance estimator that arises from inverting the line impedance estimate can only be asymptotically unbiased. We name this asymptotic bias on the admittance estimate ``Impedance inversion Bias.’’
% The admittance may be estimated directly, however we demonstrate these direct estimates are biased. We name the different biases that arise when directly estimating the admittance ``Omitted-phase Bias’’, ``Real and Imaginary Simultaneity Bias’’, and ``Endogeneity Bias.’’

% We then present an estimator for network impedance estimation without phase measurements.
% Traditionally, network estimation has focused on admittance estimation because of the appeal of the Laplacian structure of admittance matrices.
% Following the logic from the single-line estimate, we hypothesize that network impedance estimation will be more accurate than network admittance estimator because it avoids estimation bias. 
% This hypothesis is corroborated by experiments on [BLANK].

\subsection{Preliminaries and Notation}
Complex numbers are defined as $z = a + j b$, where $a,b \in \bb R^2$ and $j^2 = -1$. The complex conjugate of $z \in \bb C$ is denoted by $z^\star$. The pseudo-inverse $A^\dagger$ of a matrix $A \in \bb C^{m \times n}$ is obtained by inverting its non-zero singular values, i.e., if $A = U \mt{diag}([s, 0,\dots, 0]) (V^\star)^\top$ then $A^\dagger = U \mt{diag}([1/s, 0,\dots, 0]) (V^\star)^\top$. If $A$ is square and invertible, then $A^\dagger$ is equal to $A^{-1}$ the inverse of $A$. The noisy measurement of a variable $x$ is denoted by $\tilde x$. In a regression model $\tilde z = A \tilde x + \epsilon$, we call $\tilde x$ and $\tilde z$ the Right-Hand Side (RHS) and Left-Hand Side (LHS) variables, respectively. All variables have an implicit time-dependence, except the parameters of the network and probability distributions. This dependence is implicit. The stacked vector of $N+1$ measurements of a variable $x$ is denoted by $[\tilde x]_{t=t_0}^{t_N}$

\section{System model}
\subsection{Distribution Network}\label{subsec_model}
We model the a power grid as a graph $\mc G(\mc V, \mc E)$ with $n$ nodes $\mc V = \{1, \dots, n\}$ and edges $\mc E \subseteq \mc V \times \mc V$. Each node $h = 1,\dots, n$ is injecting a current $i_h$ at a voltage $v_h$. Consumer nodes are modeled with a negative injection. Nodal AC voltages can be expressed as phasors $v_h = |v_h|e^{j\theta_h}$, where $\theta_h$ is the phase difference with node $1$. The nodal power angle $\phi_{h}$ is the phase difference between the current injection $i_{h}$ and the voltage $v_h$ (i.e., the arc cosine of the power factor). Additionally, we define all the quantities for each line $h \rightarrow k$ using a double subscript notation "$hk$". Thus, the voltage drop, current flow, and power angle of a line from $h$ to $k$ are $v_{hk}$, $i_{hk}$, and $\phi_{hk}$, respectively. For convenience, we define $\phi_{hk}$ as the angle between $i_{hk}$ and $v_h$.

\begin{assumption}\label{ass_small_theta}
    The voltage phase angle differences are small, i.e., $|\theta_h| \ll 1$.
\end{assumption}
Comparatively small amounts of power is transmitted on DGs, which are also quite resistive. This means that Assumption \ref{ass_small_theta} is quite mild. Similar to the voltages, nodal currents are also phasors $i_h = |i_h|e^{j\theta_h - j\phi_h}$, where the nodal power angle $\phi_h$ is not small in general.
% The apparent power flowing in the line $h\rightarrow k$ measured at node $h$ is defined as $s_{hk} = |s_{hk}|e^{j\phi_{hk}} = |v_h|e^{j\theta_h} i_{hk}^\star$. Using polar coordinates, we can express the current phasor $i_{hk}$ as
% \begin{align*}
%     i_{hk} &= \frac{|s_{hk}|}{|v_h|} e^{j(\theta_h - \phi_{hk})}
% \end{align*}

We use a lumped-$\pi$ circuit to model each electrical connection in $\mc E$ \cite{kundur2007power}, where the lines are modeled as an inductor and a resistor in series, and the shunts as capacitors. This gives the line and shunt admittances $y_{hk} = g_{hk} + jb_{hk}$ and $y_{hh,s}$, respectively. If two nodes $h$ and $k$ are not connected, we use $y_{hk} = 0$. We assume that there are no phase-shifting transformers, and that the existing transformers have a constant ratio over the duration of the experiment. This means that the voltages and currents can be re-scaled to model the transformers as simple line admittances.

All parameters are collected in the admittance matrix $Y = G+jB$ with $G$ and $B$ in $\bb R^{n\times n}$. $Y$ is defined by its non-diagonal elements $Y_{hk} = y_{hk}, \forall h,k \in \mc V^2$ and its diagonal ones $Y_{kk} = -\sum_{k=1}^n y_{hk} + y_{hh,s}, \forall h \in \mc V$. For convenience, we also define the X/R ratio $\rho_{hk} = -\frac{b_{hk}}{g_{hk}}$ of each line\footnote{Inverting a complex number does not invert the ratio between its real and imaginary parts.}. 
% The current $i_{hk}$ flowing through a line $h \rightarrow k$ is defined with a phase reference at node $h$\footnote{One can use $-i_{kh}$ instead if the current flow is measured at node $k$.}, hence equating the phasor $|i_{hk}| e^{j(\theta_h - \phi_{hk})} = i_{hk}$, where $\phi_{hk}$ is the power angle of the flowing power.

\subsection{Measuring Devices}\label{subsec_sensors}
Smart meters provide the measurements $|\tilde v_h|$ and $|\tilde i_h|$ of the amplitudes $|v_h|$ and $|i_h|$ of the voltage and current and the measurement $\tilde \phi_h$ of the power angle $\phi_h$. Moreover, such sensors can provide the line flow measurements $\tilde i_{hk}$ and $\tilde \phi_{hk}$ if they are placed on a specific line rather than a node.
\begin{assumption}\label{ass_gaussian}
    The noise on the current magnitude, voltage magnitude, and power angle measurements is independent, Gaussian, and centered on zero.
\end{assumption}
Although the regulations of commercial smart meters require a given maximum admissible error, which would imply that the noise follow a truncated Gaussian \cite{smart_meter_classes}, the resulting interval is usually large enough to approximate the noise distribution as Gaussian.
\begin{assumption}\label{ass_small_deltaphi}
    The error on power angle measurement is small.
\end{assumption}
With Assumption \ref{ass_small_deltaphi}, the noise on the current phasor is almost perfectly Gaussian in Cartesian coordinates. However, the real and imaginary parts of the noise on current measurements are not independent if the power angle is non-zero.
\begin{remark}
    The active and reactive powers are not a linear combination of the measured quantities and are subject to the dependence between the real and imaginary parts of the current noise. Their uncertainty is therefore neither Gaussian-distributed nor independent, which is a common assumption in the literature.
\end{remark}

For clarity, we define the noise on the difference of voltage magnitudes between two nodes and its distribution as
\begin{subequations}\label{eq_stats_noise}
\begin{align}\label{eq_stats_noise_v}
    \delta_{hk}^v := (|\tilde v_h| - |\tilde v_k|) - (|v_h| - |v_k|) \sim \mc{N}(0, \sigma_{hk}^v).
\end{align}
We chose the lower-case symbol $\sigma_{hk}^v$ for the variance to emphasize the fact that it is scalar. We also define the noise on the line current flow $|\tilde i_{hk}| e^{-j \tilde \phi_{hk}}$ measured at node $h$ as
%NOTE: \theta_h disappeared because \tilde i_{hk} does not actually correspond to the real current (it's a biased measurement), not necessarily super clear, could deserve its footnote :/
\begin{align}\label{eq_stats_noise_i}
    \!\lt[\matb \delta_{hk}^\Re \\ \delta_{hk}^\Im \mate\rt] \!\!:=\!\! 
    \lt[\matb |\tilde i_{hk}|\cos(\tilde \phi_{hk}) \!-\! |i_{hk}|\cos(\phi_{hk}) 
    \\
    |\tilde i_{hk}|\sin(\tilde \phi_{hk}) \!-\! |i_{hk}|\sin(\phi_{hk})
    \mate\rt]
    \!\sim \mc{N}(0, \Sigma_{hk}^i),\!
\end{align}
\end{subequations}
where $\Sigma_{hk}^i$ is a $2$-by-$2$ symmetric matrix containing the variances $\sigma_{hk}^\Re = \mt{var}[\delta_{hk}^\Re]$ and $\sigma_{hk}^\Im = \mt{var}[\delta_{hk}^\Im]$ of both parts of the noise on $i_{hk}$ in the diagonal elements, and their covariance $\mt{cov}[\delta_{hk}^\Re, \delta_{hk}^\Im]$ in the off-diagonal elements. Finally, we consider the the final measurements are synchronized block-averages of the instantaneous readings, and that the average over the estimation window is subtracted from each block to center the data.
%The confidence intervals given in smart meter specifications and the power angle measurements can provide the ratios between the elements of $\Sigma_{hk}^i$ and $\sigma_{hk}^v$ \cite{smart_meter_classes}. However, the exact value of $\sigma_{hk}^v$ is hard to obtain in general, and is more easily inferred from data as explained in Appendix \ref{app_tls}.

\section{Line parameter estimation}\label{section_one_line}

% Most DSOs are aware of the topologies of their DGs. Moreover, most DGs are radial. This enables the DSOs to reconstruct the current flow through each line, starting from the leaves of the network, and estimate the line admittances independently. We therefore first perform an in-depth study on the estimation problem for a single line. In this section, we present the two ways to approach a DG parameter estimation problem.

We first study the estimation problem for a single line before addressing network estimation in Section \ref{section_network}.

\subsection{Impedance Estimation}
The parameters $z_{hk} = y_{hk}^{-1}$ of a line $h\rightarrow k$ relate the current flow $i_{hk}$ to the difference of voltages $v_h - v_k$ at each end of the line according to the following relationship
\begin{align}\label{eq_line_flow_z_og_complex}
     |v_h|e^{j\theta_h} - |v_k|e^{j\theta_k} &= z_{hk} |i_{hk}|e^{j(\theta_h - \phi_{hk} )},
\end{align}
% which can also be written as
% \begin{align}\label{eq_line_flow_z_og_complex}
%     |v_h| - |v_k|e^{-j\theta_{hk}}
%     &= z_{hk} |i_{hk}|e^{-j\phi_{hk}}.
% \end{align}
% Under Assumption \ref{ass_small_theta}, \eqref{eq_line_flow_og_complex} can be linearized using the limited development $e^{-j\theta_{hk}} \approx 1 - j\theta_{hk}$. The simplified relation is
% \begin{align*}
%     |v_h| - |v_k| + j|v_k|\theta_{hk} &= (r_{hk} + j x_{hk}) |i_{hk}|e^{-j\phi_{hk}},
% \end{align*}
which can be written in real numbers and linearized using $e^{-j\theta_{hk}} \approx 1 - j\theta_{hk}$ from Assumption \ref{ass_small_theta} as
\begin{subequations}\label{eq_line_flow_z_lin_real}
\begin{align}\label{eq_line_flow_z_lin_real_a}
    |v_h| - |v_k| &= r_{hk}|i_{hk}|\cos(\phi_{hk}) + x_{hk} |i_{hk}|\sin(\phi_{hk}), \\
    |v_k|\theta_{hk} &= -r_{hk}|i_{hk}|\sin(\phi_{hk}) + x_{hk}|i_{hk}|\cos(\phi_{hk}).
    \label{eq_line_flow_z_lin_real_b}
\end{align}
\end{subequations}
As shown in \cite{lisa2023}, \eqref{eq_line_flow_z_lin_real} is a very close approximation of \eqref{eq_line_flow_z_og_complex} when the currents and voltages are exact. 

The phase $\theta_{hk}$ is not measured so one must use \eqref{eq_line_flow_z_lin_real_a} to fit the parameters $r_{hk}$ and $x_{hk}$. With noisy measurements, the regression is performed on the model
\begin{align}\label{eq_line_flow_z_lin_real_reduced_simple}
    \!\!|\tilde v_h| \!-\! |\tilde v_k| &= r_{hk} |\tilde i_{hk}|\cos(\tilde \phi_{hk})
    + x_{hk} |\tilde i_{hk}|\sin(\tilde \phi_{hk}) +\! \epsilon_{hk},\!
\end{align}
where $\epsilon_{hk} = \delta_{hk}^v - r_{hk} \delta_{hk}^\Re - x_{hk} \delta_{hk}^\Im$ embeds the uncertainty. 

\begin{remark}\label{rem_regu}
    Under Assumption \ref{ass_small_theta}, \eqref{eq_line_flow_z_lin_real_b} $\approx 0$ so one can also use \eqref{eq_line_flow_z_lin_real_b} as a regularizer with strength $\lambda$,~by adding $-\lambda r_{hk}|\tilde i_{hk}|\sin(\tilde \phi_{hk}) \!+\! \lambda x_{hk}|\tilde i_{hk}|\cos(\tilde \phi_{hk}) \approx 0$ to the regression. This can be done by augmenting the voltage and current data matrices with $[0]_{t=t_0}^{t_N}$ and $[-\lambda |\tilde i_{hk}|\sin(\tilde \phi_{hk}), \lambda |\tilde i_{hk}|\cos(\tilde \phi_{hk})]_{t=t_0}^{t_N}$, respectively.
\end{remark}

\subsection{Admittance Estimation}
Similar to the impedance, the admittance relates the current flow $i_{hk}$ to the difference of voltages $v_h - v_k$ at each end of the line, but in the inverse way. This gives following relationship
\begin{align}\label{eq_line_flow_og_complex}
    |i_{hk}|e^{-j\phi_{hk}}
    &= y_{hk}\lt(|v_h| - |v_k|e^{-j\theta_{hk}}\rt).
\end{align}
As for \eqref{eq_line_flow_z_lin_real}, using the truncated expansion $e^{-j\theta_{hk}} \approx 1 - j\theta_{hk}$ yields
\begin{subequations}\label{eq_line_flow_lin_real}
\begin{align}
    |i_{hk}|\cos(\phi_{hk}) &= g_{hk}(|v_h| - |v_k|) - b_{hk}|v_k|\theta_{hk}, \\
    |i_{hk}|\sin(\phi_{hk}) &= - g_{hk}|v_k|\theta_{hk} - b_{hk}(|v_h| - |v_k|).
\end{align}
\end{subequations}
One can observe that the unobserved phase angle $\theta_{hk}$ is not simple from \eqref{eq_line_flow_lin_real} as it is done in \eqref{eq_line_flow_z_lin_real}. Assumption \ref{ass_small_theta} hints that one could handle the missing phase measurements by replacing them with zero. However, the phase often has a key role in power transmission in practice so assuming it to be exactly zero is often too constraining. Nevertheless, for very resistive grids, inferring $g_{hk}$ and $b_{hk}$ in the following regression model sometimes yields good results.
\begin{subequations}\label{eq_simple_ls_regression_noisy}
\begin{align}\label{eq_simple_ls_regression_noisy_a}
    \!\!\!|\tilde i_{hk}|\cos(\tilde \phi_{hk}) \!&= g_{hk}(|\tilde v_h| \!-\! |\tilde v_k|) + \mu_{hk} \!-\! b_{hk}|v_k|\theta_{hk},\! \\
    \!\!\!|\tilde i_{hk}|\sin(\tilde \phi_{hk}) \!&= -b_{hk}(|\tilde v_h| \!-\! |\tilde v_k|) \underbrace{+\, \nu_{hk} \!-\! g_{hk}|v_k|\theta_{hk}}_{\mt{considered as noise}},\!
    \label{eq_simple_ls_regression_noisy_b}
\end{align}
\end{subequations}
where $\mu_{hk} = \delta_{hk}^\Re - g_{hk} \delta_{hk}^v$ and $\nu_{hk} =  \delta_{hk}^\Im + b_{hk}\delta_{hk}^v$.

To conclude this section, we observe that the admittance regression model \eqref{eq_simple_ls_regression_noisy} actively enforces a small phase angle $\theta_{hk}$, in contrast with the impedance regression \eqref{eq_line_flow_z_lin_real_reduced_simple}.

%From \cite{lisa2023}, \eqref{eq_line_flow_lin_real} is a very close approximation of \eqref{eq_line_flow_og_complex} when the currents and voltages are exact. %However, the statistics of the noise on their measurements create many challenges leading to inconsistent estimates, i.e., estimates that do not converge to the exact parameters as the sample size increases. We present all the biases leading to this inconsistency in the rest of this section.

% or equivalently,
% \begin{subequations}\label{eq_line_flow_lin_real}
% \begin{align}
%     |v_h| \!-\! |v_k| &= -\rho_{hk}|v_k|\theta_{hk} + g_{hk}^{-1}|i_{hk}|\cos(\phi_{hk}), \\
%     |v_k|\theta_{hk} &= \rho_{hk}(|v_h| \!-\! |v_k|) - g_{hk}^{-1}|i_{hk}|\sin(\phi_{hk}),
% \end{align}
% \end{subequations}
% where $\rho_{hk} = -\frac{b_{hk}}{g_{hk}}$ is the X/R ratio of the line.

\section{Impedance Estimation Bias and Variance} 

This section characterizes the bias and variance of single line impedance estimation.
As described in Appendix \ref{app_TLS}, fitting \eqref{eq_line_flow_z_lin_real_reduced_simple} using the Ordinary Least Squares (OLS) results in biased estimates of $x_{hk}$ and $r_{hk}$. This bias can be eliminated by using Total Least Squares (TLS). However, if there are correlations between the LHS and the RHS of the regression model, the TLS estimates can also be biased. Nevertheless, for the noise model \eqref{eq_stats_noise}, the TLS are a close approximation of the Maximum Likelihood Estimator (MLE), which can only be computed exactly if the phase is known.

\begin{lemma}\label{lem_bias}
    If a regression model $\tilde z = A (\tilde x - \epsilon_x) + \epsilon_z$, where $\epsilon_x, \epsilon_x \sim \mc N(0, \Sigma)$\footnote{The RHS and LHS variables must be normalized to have the same noise variance.} is fitted to $D$ datasets $\tilde x_d, \tilde z_d$, the bias of the TLS is given by
\begin{align}\label{eq_tls_bias}
    \!E[\hat A] \!-\! A = &\!\lt(\sum_{d=1}^D \mt{var}[\tilde x_d] - \mt{var}[\epsilon_{xd}] \!\rt)^{\!\!\!-1}
    \\ \nonumber
    &
    \cdot\!\! \lt(\sum_{d=1}^D \mt{cov}[\epsilon_{xd}, \epsilon_{zd}] + \mt{cov}[\epsilon_{xd}, \tilde z_d] + \mt{cov}[\tilde x_d, \epsilon_{zd}] \!\rt) \!\!.
\end{align}
\end{lemma}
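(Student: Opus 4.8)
The plan is to derive the bias from the algebraic characterization of the TLS solution rather than from any closed form for $\hat A$. Stacking the $D$ datasets into the (normalized) augmented data matrix $M = [\tilde X \ \tilde Z]$, whose rows are the measurement pairs, the TLS estimate is read off the trailing right-singular subspace of $M$: the block $[\hat A^\top, -I]^\top$ spans the minimal-singular-value subspace of $M^\top M$. Taking the top block of the eigen-relation $M^\top M\,[\hat A^\top,-I]^\top = [\hat A^\top,-I]^\top\,\sigma_{\min}^2$ yields the corrected normal equation
\begin{equation}
\Big(\sum_{d} \tilde X_d^\top \tilde X_d - \Lambda\Big)\hat A = \sum_{d}\tilde X_d^\top \tilde Z_d - \Gamma,
\end{equation}
where $\Lambda$ and $\Gamma$ are the TLS noise-floor corrections induced by $\sigma_{\min}^2$. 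This is the object I would analyze: the whole computation reduces to taking the expectations of the two sides.

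First I would substitute the errors-in-variables decomposition $\tilde x_d = x_d + \epsilon_{xd}$, $\tilde z_d = z_d + \epsilon_{zd}$ with $z_d = A x_d$ (which is exactly the lemma's model $\tilde z = A(\tilde x - \epsilon_x)+\epsilon_z$), and expand term by term, using that the centred data make $\tilde X_d^\top \tilde X_d$ and $\tilde X_d^\top \tilde Z_d$ unbiased estimators (up to the common sample count) of $\mt{var}[\tilde x_d]$ and $\mt{cov}[\tilde x_d,\tilde z_d]$. The signal part $\sum_d \mt{var}[x_d]\,A$ is what the leading term must reproduce, so the bias is the inverse of the denoised second moment applied to the residual. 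Invoking the equal-noise-variance normalization of the footnote, I would show that in expectation $\Lambda = \sum_d \mt{var}[\epsilon_{xd}]$, so the inverted matrix becomes exactly $\sum_d(\mt{var}[\tilde x_d]-\mt{var}[\epsilon_{xd}])$, matching the first factor of \eqref{eq_tls_bias}.

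The remaining work is to collect the residual cross terms. Expanding $\sum_d\tilde X_d^\top\tilde Z_d$ produces, besides the signal term, the correlations $\mt{cov}[\epsilon_{xd}, z_d]$, $\mt{cov}[x_d,\epsilon_{zd}]$ and $\mt{cov}[\epsilon_{xd},\epsilon_{zd}]$; substituting $z_d=\tilde z_d-\epsilon_{zd}$ and $x_d=\tilde x_d-\epsilon_{xd}$ rewrites these in terms of the observable quantities $\mt{cov}[\epsilon_{xd},\tilde z_d]$, $\mt{cov}[\tilde x_d,\epsilon_{zd}]$ and $\mt{cov}[\epsilon_{xd},\epsilon_{zd}]$ appearing in the second factor, with $\Gamma$ absorbing the double-counted cross terms. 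Finally I would appeal to the first-order, low-SNR regime so that $E[\hat A]\approx (E[\,\cdot\,])^{-1}E[\,\cdot\,]$ and the replacement of the denoised Gram matrix by its expectation is legitimate; under the low SNR that motivates consistency in the first place, this approximation is the natural one.

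I expect the main obstacle to be the precise evaluation of the noise-floor corrections $\Lambda$ and $\Gamma$ when the augmented noise is correlated, i.e. when $\mt{cov}[\epsilon_{xd},\epsilon_{zd}]\neq 0$. This is exactly the case the normalization footnote is meant to control, and it is where the textbook ``isotropic TLS is consistent'' argument breaks down: $\sigma_{\min}^2$ no longer coincides with the common noise variance on the nose, so a perturbation of the trailing singular value is needed and it is this perturbation that determines how $\mt{cov}[\epsilon_{xd},\epsilon_{zd}]$ enters the final formula. Once that term is pinned down, the errors-in-variables substitution and the covariance bookkeeping leading to \eqref{eq_tls_bias} are routine.
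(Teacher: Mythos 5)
Your plan is not the paper's route, and as written it has a genuine gap at precisely the decisive step. The paper (Appendix~\ref{app_tls_bias}) never touches the SVD eigen-relation: it works directly at the population-moment level, taking as given that TLS computes \eqref{eq_app_tls_def}, i.e.\ $\hat A = (\mt{var}[\tilde x]-\mt{var}[\epsilon_x])^{-1}\mt{cov}[\tilde x,\tilde z]$ with $\mt{var}[\epsilon_x]$ identified exactly by the low-rank step, and then simply expands $\mt{cov}[\tilde x,\tilde z]-\mt{cov}[x,z]$ into the cross-covariance terms; the $D>1$ case is handled by summing the per-dataset moments. By contrast, you try to \emph{derive} that moment identity from the corrected normal equations, which is more first-principles but forces you to evaluate your corrections $\Lambda$ and $\Gamma$ --- and you never do. Two concrete problems: (i) for single-output TLS the corrected normal equation is $(\tilde X^\top\tilde X-\sigma_{\min}^2 I)\hat a=\tilde X^\top\tilde z$ with \emph{no} right-hand-side correction, and for multi-output the correction has the form $V_{12}\Sigma_2^2 V_{22}^{-1}$, which is neither diagonal nor a simple subtraction from $\sum_d\tilde X_d^\top\tilde Z_d$; positing an unspecified $\Gamma$ that ``absorbs the double-counted cross terms'' is exactly the bookkeeping a proof must perform, since it decides the sign with which $\mt{cov}[\epsilon_{xd},\epsilon_{zd}]$ enters (note the paper's own appendix expansion produces $\mt{cov}[\epsilon_x,\tilde z]+\mt{cov}[\tilde x,\epsilon_z]-\mt{cov}[\epsilon_x,\epsilon_z]$, so even the sign in \eqref{eq_tls_bias} requires care). (ii) Your step ``show that in expectation $\Lambda=\sum_d\mt{var}[\epsilon_{xd}]$'' is valid only when the augmented noise $[\epsilon_x,\epsilon_z]$ is isotropic --- i.e.\ exactly when $\mt{cov}[\epsilon_{xd},\epsilon_{zd}]=0$, the uninteresting case. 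When the noises are correlated, $\sigma_{\min}^2$ converges to the smallest eigenvalue of the augmented noise covariance, which is strictly below the common normalized variance, so the claimed identification of $\Lambda$ fails in the very regime the lemma is about.

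You flag this yourself (``a perturbation of the trailing singular value is needed \dots once that term is pinned down''), but that perturbation \emph{is} the content of the lemma: it is where the $\mt{cov}[\epsilon_{xd},\epsilon_{zd}]$ term, its sign, and any induced correction to the denominator $\sum_d(\mt{var}[\tilde x_d]-\mt{var}[\epsilon_{xd}])$ all originate. Deferring it leaves the proof without its key computation. Two ways to close the gap: either carry out the first-order perturbation of the trailing singular subspace of $\frac{1}{N}M^\top M$ about its noiseless limit and verify that the resulting bias matches \eqref{eq_tls_bias} (resolving the sign question above), or abandon the SVD machinery and argue as the paper does --- accept the population form \eqref{eq_app_tls_def} of the TLS estimator, substitute $\tilde x_d=x_d+\epsilon_{xd}$, $\tilde z_d = A x_d+\epsilon_{zd}$, and expand $\sum_d(\mt{cov}[\tilde x_d,\tilde z_d]-A\,\mt{var}[x_d])$ directly, which is a half-page of covariance algebra with no spectral argument at all.
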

\begin{proof}
    The proof is derived in Appendix \ref{app_tls_bias}.
\end{proof}

Applying Lemma \ref{lem_bias} with $D=1$ to the model \eqref{eq_line_flow_z_lin_real_reduced_simple} with the noise statistics \eqref{eq_stats_noise} gives
the bias
\begin{align*}
    &E[\hat r_{hk}] - r_{hk} = (\mt{var}[|\tilde i_{hk}|\cos(\tilde \phi_{hk})] - \mt{var}[\delta_{hk}^\Re])^{-1}
    \\ \nonumber
    &
     \!\cdot \! (\mt{cov}[\delta_{hk}^\Re, |\tilde v_h| \!\!-\!\! |\tilde v_k|] \!+\! \mt{cov}[|\tilde i_{hk}| \! \cos(\tilde \phi_{hk}), \delta_{hk}^v] \!-\! \mt{cov}[\delta_{hk}^\Re, \delta_{hk}^v]),
\end{align*}
and similarly for $x_{hk}$. All three covariances are zero in \eqref{eq_stats_noise} so we conclude that the TLS estimate of the impedance is unbiased. Moreover, the variance of the impedance estimate can be derived from \cite{crassidis_tls_cov} as\footnote{$\mt{var}[\hat r_{hk}, \hat x_{hk}]$ depends on the exact parameters $r_{hk}$ and $x_{hk}$ and can, in practice, only be estimated if the SNR is sufficiently high \cite{crassidis_tls_cov}.}
\begin{align}\label{eq_rx_variance}
    \!\!\mt{var}\lt[\matb \hat r_{hk}
    \\
    \hat x_{hk} \mate\rt] &=
    \frac{\sigma_{hk}^v \!+\! \|[r_{hk}, x_{hk}]\|^2_{\Sigma_{hk}^i}}{N}
    \mt{var} \!
    \lt[\matb
    |\tilde i_{hk}|\cos(\tilde \phi_{hk})
    \\
    |\tilde i_{hk}|\sin(\tilde \phi_{hk})
    \mate\rt]^{-1}\!\!\!\!,
\end{align}
where $\|[r_{hk}, x_{hk}]\|^2_{\Sigma_{hk}^i} \!\! = [r_{hk}, x_{hk}]\Sigma_{hk}^i [r_{hk}, x_{hk}]^\top$. Equation \eqref{eq_rx_variance} shows the consistency\footnote{Consistent means that it converges with probability $1$ to the exact parameters as $N\rightarrow\infty$. Efficient means that its variance is equal to the Camer-Rao lower bound\cite{statistical_inference_book}.} of the TLS estimate because $\mt{var}[\hat r_{hk}, \hat x_{hk}] \rightarrow 0$ as $N \rightarrow \infty$.

% In the next section we present all the statistical biases on parameter estimates originating from the missing phase measurement.

\section{Admittance estimation bias}\label{section_biases}
% For a single line, there is no strong motivation for identifying the admittance rather than the impedance, as one can be inverted to find the other. However, this process also inverts the estimation error, which may become large. Nevertheless, the admittance estimation is well motivated for network identification, as only the admittance matrix contains the topology. For clarity, we will do the statistical study for a single line estimation and discuss the network identification problem in the next section.

This section characterizes the biases of several admittance estimation methods from the literature for a single line. We focus on the limit case where the sample size is very large, because the amount of data is not a big limitation for the estimation problem. This means that although the variance can be derived from \cite{crassidis_tls_cov} (similar to \eqref{eq_rx_variance}), we do not include it in this study because it decays to zero as $N \rightarrow \infty$ and is therefore less limiting than the biases.

The analysis demonstrates that four known methods for estimating the admittance from smart meter measurements produce biased estimates. Nevertheless, we show in Section \ref{subsec_xr_regression} that, surprisingly, the biases decay with $N \rightarrow \infty$ only if one inverts the impedance estimate. Thus, if one is interested in the admittance of a line from a large data set, it may be better to estimate the impedance and then invert the estimate.

\subsection{Omitted-Phase Bias}\label{subsec_omission}

We start by quantifying the bias that appears when fitting the model \eqref{eq_simple_ls_regression_noisy} using the TLS. The magnitude difference $|\tilde v_h| - |\tilde v_k|$ is often correlated with the phase shift $\theta_{hk}$, which means that a bias may appears if it is incorporated as noise, as explained in Section \ref{subsec_endo}. Using Lemma \ref{lem_bias} with $D=1$ and the model \eqref{eq_simple_ls_regression_noisy_a}, we can express the resulting omitted-phase bias\footnote{More commonly known as omitted-variable bias.} with
\begin{align*}
E[\hat g_{hk}] &\!-\! g_{hk} = \\ \nonumber
&\; b_{hk}(\mt{var}[|\tilde v_h| \!-\! |\tilde v_k|] \!-\! \sigma_{hk}^v)^{-1} \mt{cov}[|\tilde v_h| \!-\! |\tilde v_k|,\! |v_k|\theta_{hk}],
\end{align*}
and similarly for $\hat b_{hk}$.

In order to avoid the omitted-phase bias, the authors of \cite{zhang_non_pmu_newton, mixed_pmu_sm_newton} estimate both the phase $\theta_{hk}$ and the parameters $g_{hk}$ and $b_{hk}$ iteratively in a joint problem. This means using the estimates $\hat g_{hk}$ and $\hat b_{hk}$ to find the estimate $\hat e_{hk}^\theta$ of $|v_k|\theta_{hk}$, which can in be plugged in \eqref{eq_simple_ls_regression_noisy} to update $\hat g_{hk}$ and $\hat b_{hk}$. Any method to compute $\hat e_{hk}^\theta$ will aim at satisfying \eqref{eq_simple_ls_regression_noisy}. Rearranging the terms, vectorizing and plugging in the estimates, this means
\begin{align}\label{eq_simple_ls_regression_noisy_phase}
    \hat e_{hk}^\theta \!\! \lt[\matb \hat b_{hk} \\ \hat g_{hk} \mate\rt] \!\!
    &= \!(|\tilde v_h| \!-\! |\tilde v_k| \!-\! \delta_{hk}^v)
    \!\!
    \lt[\matb \hat g_{hk} \\ -\hat b_{hk} \mate\rt] 
    \!\!-\!
    |\tilde i_{hk}| \!\! \lt[\matb \cos(\tilde \phi_{hk})
    \\
    \sin(\tilde \phi_{hk}) \mate\rt]
    \!\!+\!\!
    \lt[\matb \delta_{hk}^\Re \\ \delta_{hk}^\Im \mate\rt] \!\!.
\end{align}
The estimate $\hat e_{hk}^\theta$ can then be used in \eqref{eq_simple_ls_regression_noisy} as
\begin{subequations}
    \label{eq_simple_ls_regression_noisy_with_phase_est}
\begin{align}
    \label{eq_simple_ls_regression_noisy_with_phase_est_a}
    \!\!\!|\tilde i_{hk}|\cos(\tilde \phi_{hk}) \!&= [g_{hk}, b_{hk}] [(|\tilde v_h| \!-\! |\tilde v_k|), \!-\hat e_{hk}^\theta]^\top \!+ \mu_{hk} ,\! \\
    \!\!\!|\tilde i_{hk}|\sin(\tilde \phi_{hk}) \!&= -[g_{hk}, b_{hk}] [\hat e_{hk}^\theta, (|\tilde v_h| \!-\! |\tilde v_k|)]^\top \!+ \nu_{hk},\!
    \label{eq_simple_ls_regression_noisy_with_phase_est_b}
\end{align}
\end{subequations}

While estimating $\theta_{hk}$ avoids the omitted-phase bias, it leads to a real and imaginary simultaneity bias\footnote{The simultaneity bias problem is well known in statistics. It originates from the lack of causality between variables that occur simultaneously \cite{simultaneity_book}, i.e., variations in voltage and current flow.}, which is presented in the next section.

\subsection{Real and Imaginary Simultaneity}\label{subsec_simultaneity}

% However, while this process removes the correlation presented in Section \ref{subsec_omission}, a correlation between the RHS and the LHS noises remains.  To show this dependence, we rewrite the equations for the real and imaginary line current \eqref{eq_line_flow_lin_real} with noisy measurements and divide it by $g_{hk}$ to obtain
% \begin{subequations}\label{eq_line_flow_lin_real_simul}
% \begin{align}\label{eq_line_flow_lin_real_simul_a}
%     \!\!\!|\tilde v_h| \!-\! |\tilde v_k| \!&= -\rho_{hk}|\tilde v_k|\theta_{hk} + g_{hk}^{-1}|\tilde i_{hk}|\!\cos(\tilde \phi_{hk}) - \mu_{hk}, \!\!
%     \\
%     |\tilde v_k|\theta_{hk} &= \rho_{hk}(|\tilde v_h| \!-\! |\tilde v_k|) - g_{hk}^{-1}|\tilde i_{hk}|\!\sin(\tilde \phi_{hk}) + \nu_{hk}.\!\!
%     \label{eq_line_flow_lin_real_simul_b}
% \end{align}
% \end{subequations}
A simultaneity bias appears when LHS variables in one equation appear on the RHS of another equation, hence creating correlation between the noise on the LHS and the variables in the RHS. To show that this correlation exists, we consider that the estimates are good, i.e., $[\hat e_{hk}^\theta, \hat g_{hk}, \hat b_{hk}] \approx [|v_k|\theta_{hk}, g_{hk}, b_{hk}]$. Because $[\hat g_{hk},\hat b_{hk}] [g_{hk},b_{hk}]^\dagger \approx 1$ by construction, one obtains
\begin{align*}
    % \mt{cov} [ \hat e^{\theta}_{hk}, -\mu_{hk} ] &\approx -b_{hk}^{-1} \mt{cov} [ b_{hk} \hat e^{\theta}_{hk}, \mu_{hk}] = -b_{hk}^{-1} \mt{var}[\mu_{hk}],
    % \\
    % \mt{cov} [ \hat e^{\theta}_{hk}, -\nu_{hk} ] &\approx -g_{hk}^{-1} \mt{cov} [ g_{hk} \hat e^{\theta}_{hk}, \nu_{hk}] = -g_{hk}^{-1} \mt{var}[\nu_{hk}].
    %
    \mt{cov}\!\big[[ \delta_{hk}^\Re, \delta_{hk}^\Im], \hat e^{\theta}_{hk}\big]\!
    &\approx
    \mt{cov}\!\big[[ \delta_{hk}^\Re, \delta_{hk}^\Im], \hat e^{\theta}_{hk}[\hat g_{hk}, \hat b_{hk}]\big]  [g_{hk},b_{hk}]^\dagger.
\end{align*}
The expression \eqref{eq_simple_ls_regression_noisy_phase} shows that $\hat e_{hk}^\theta$ is correlated to the noises $\delta_{hk}^\Re$ and $\delta_{hk}^\Im$. Hence,
\begin{align*}
    % \mt{cov} [ \hat e^{\theta}_{hk}, -\mu_{hk} ] &\approx -b_{hk}^{-1} \mt{cov} [ b_{hk} \hat e^{\theta}_{hk}, \mu_{hk}] = -b_{hk}^{-1} \mt{var}[\mu_{hk}],
    % \\
    % \mt{cov} [ \hat e^{\theta}_{hk}, -\nu_{hk} ] &\approx -g_{hk}^{-1} \mt{cov} [ g_{hk} \hat e^{\theta}_{hk}, \nu_{hk}] = -g_{hk}^{-1} \mt{var}[\nu_{hk}].
    %
    \mt{cov}\!\big[[ \delta_{hk}^\Re, \delta_{hk}^\Im], \hat e^{\theta}_{hk}[\hat g_{hk},\hat b_{hk}]\big]
     =
     \mt{var}\!\!\lt[\matb \delta_{hk}^\Re \\ \delta_{hk}^\Im \mate\rt]\!,
\end{align*}
which means that
\begin{align*}
    \mt{cov}\!\big[[ \delta_{hk}^\Re, \delta_{hk}^\Im], \hat e^{\theta}_{hk}\big]\!
    &\approx \Sigma_{hk}^i [g_{hk},b_{hk}]^\dagger\!.
\end{align*}
Hence, using Lemma \ref{lem_bias} for \eqref{eq_simple_ls_regression_noisy_with_phase_est} with $D=2$ and the datasets $[|\tilde v_h| \!-\! |\tilde v_k|, -\hat e^{\theta}_{hk}], |\tilde i_{hk}|\cos(\tilde \phi_{hk})$ and $ [-\hat e^{\theta}_{hk}, -|\tilde v_h| \!+\! |\tilde v_k|], |\tilde i_{hk}|\sin(\tilde \phi_{hk})$ yields the real and imaginary simultaneity bias
\begin{align*}
    &E[\hat b_{hk}, \hat g_{hk}]^\top \!-\! [b_{hk}, g_{hk}]^\top \approx 
    \\ \nonumber &-\!\!
    \Bigg( \! \underbrace{\!\!
    \mt{var} \!\!\lt[\matb
    \!|\tilde v_h| \!-\! |\tilde v_k|\!
    \\
    -\hat e^{\theta}_{hk}
    \mate\rt]
    }_{\mt{from \eqref{eq_simple_ls_regression_noisy_with_phase_est_a}}}\!
    \!+\!
    \underbrace{\mt{var} \!\!\lt[\matb
    \hat e^{\theta}_{hk}
    \\
    \!|\tilde v_h| \!-\! |\tilde v_k|\!
    \mate\rt] 
    }_{\mt{from \eqref{eq_simple_ls_regression_noisy_with_phase_est_b}}}\!
    \!-\!\!
    \lt[\matb
    \!s_{hk}^e & \!\!\!\! 0
    \\
    0 & \!\!\!\! s_{hk}^e
    \mate\rt] \!
    \!\!\Bigg)^{\!\!\!\!-1}\!
    \Sigma_{hk}^i [g_{hk},b_{hk}]^\dagger\!,
    % \lt[\matb b_{hk}^{-1} \mt{var}[\mu_{hk}] 
    % \\
    % g_{hk}^{-1} \mt{var}[\nu_{hk}]
    % \mate\rt]
    % \lt[\matb b_{hk}\! \\ g_{hk}\! \mate\rt]^{\!\dagger} \!\!
    %  \mt{var}\!\!\lt[\matb \mu_{hk} \\ \nu_{hk} \mate\rt]\!\!,
\end{align*}
where $s_{hk}^e = \mt{var}[\hat e^{\theta}_{hk} \!-\! |\tilde v_k|\theta_{hk} ] + \sigma^v_{hk}$. %Other biases are present if $\hat e_{hk}^\theta$, $\hat g_{hk}$, or $\hat b_{hk}$ are not good estimates, making the approximation inaccurate.

The solution given in the literature \cite{simultaneity_book} to the simultaneity bias is to study the reduced regression model, i.e. plugging one equation into the other, substituting the exact unknown variable rather than its estimate. This has been done in \cite{zhang_non_pmu_id}, where the authors reduce the equations to remove the unobserved phase $\theta_{hk}$. We proceed similarly here, i.e., reducing $|\tilde v_k|\theta_{hk}$ by plugging \eqref{eq_simple_ls_regression_noisy_b} into \eqref{eq_simple_ls_regression_noisy_a}, which gives the reduced regression model
\begin{align}\label{eq_line_flow_lin_real_reduced}
    |\tilde i_{hk}|\cos(\tilde \phi_{hk}) &= - \rho_{hk}|\tilde i_{hk}|\sin(\tilde \phi_{hk}) \\ \nonumber
    &\!\!\!\!\!\!\! + g_{hk}(1 \!+\! \rho_{hk}^{2})(|\tilde v_h| \!-\! |\tilde v_k| - \delta_{hk}^v) + \rho_{hk} \delta_{hk}^\Im + \delta_{hk}^\Re.
\end{align} % REPLACE g(1+rho) by g/rho (1+rho)
This regression gives the estimates of $\rho_{hk}$ and $g_{hk}(1 \!+\! \rho_{hk}^{2})$. The latter can then be divided by $(1+\hat \rho_{hk}^2)$ to obtain an estimate of $g_{hk}$. A similar process can be followed for $b_{hk}$ by plugging \eqref{eq_simple_ls_regression_noisy_a} into \eqref{eq_simple_ls_regression_noisy_b}.

\subsection{Endogeneity of $\delta_{hk}^\Re$ and $\delta_{hk}^{\Im}$}\label{subsec_endo}

We investigate the bias of TLS estimates when using the regression model \eqref{eq_line_flow_lin_real_reduced}. As explained in Section \ref{subsec_sensors}, there is a correlation between $\delta_{hk}^\Re$ and $\delta_{hk}^\Im$ when the power angle is not zero. Hence, from Lemma \ref{lem_bias}, the estimate of $\hat \rho_{hk}$ has the following endogeneity bias
\begin{align*}
    &E[\hat \rho_{hk}] \!-\! \rho_{hk} = [1, 0] \,\cdot 
    \\ \nonumber &\quad\quad
    -\!\!\lt( \!\!\mt{var}\!\lt[\matb
    |\tilde i_{hk}|\sin(\tilde \phi_{hk})
    \\
    |\tilde v_h| \!-\! |\tilde v_k|
    \mate\rt]\! \!-\! \!\lt[\matb
    \sigma_{hk}^\Im &\!\!\!\! 0
    \\
    0 &\!\!\!\! \sigma_{hk}^v
    \mate\rt]\!\rt)^{\!\!\!-1}\!\!
    \lt[\matb \mt{cov}[\delta_{hk}^\Re, \delta_{hk}^\Im] \\ 0 \mate\rt] \!\!,
\end{align*}
when estimated from \eqref{eq_line_flow_lin_real_reduced}. 
% Finally, inverting $(1+\hat \rho_{hk}^2)$ also leads to a bias, which we detail in Section \ref{subsec_xr_regression}.

The removal of the endogeneity bias can be done by dividing \eqref{eq_line_flow_lin_real_reduced} by $g_{hk}(1 \!+\! \rho_{hk}^{2})$ and putting $|\tilde v_h| - |\tilde v_k|$ on the LHS. Thus, the dependence of $\delta_{hk}^\Re$ and $\delta_{hk}^\Im$ does not cross sides. Surprisingly, adjusting \eqref{eq_line_flow_lin_real_reduced} in this way produces \eqref{eq_line_flow_z_lin_real_reduced_simple}, an impedance estimate, because
\begin{subequations}\label{eq_rx_def}
    \begin{align}
        r_{hk} &= \frac{g_{hk}^{-1}}{(1 \!+\! \rho_{hk}^{2})} = \lt(g_{hk} \!+\! b_{hk}^2 g_{hk}^{-1}\rt)^{-1}\!\!,
        \\
        x_{hk} &= \rho_{hk} r_{hk} = -\lt(b_{hk} \!+\! g_{hk}^2 b_{hk}^{-1}\rt)^{-1}\!\!.
    \end{align}
\end{subequations}

\subsection{Impedance Inversion Bias}\label{subsec_xr_regression}

If the goal is to estimate the admittance values, rather than the impedance values, the problem is still not fully solved as one must invert \eqref{eq_rx_def} to obtain the estimates of $g_{hk}$ and $b_{hk}$\footnote{In Section \ref{subsec_simultaneity}, inverting $(1+\hat \rho_{hk}^2)$ also leads to a bias because $\rho_{hk}$ is not known exactly.}. This yields
\begin{subequations}\label{eq_gb_def}
    \begin{align}
        g_{hk} &= \lt(r_{hk} \!+\! x_{hk}^2 r_{hk}^{-1}\rt)^{-1},
        \\
        b_{hk} &= -\lt(x_{hk} \!+\! r_{hk}^2 x_{hk}^{-1}\rt)^{-1}.
    \end{align}
\end{subequations}
%NOTE: proven by plugging $x_{hk} = g_{hk}(g_{hk}^2 + b_{hk}^2)^{-1}$ and $r_{hk} = b_{hk}(g_{hk}^2 + b_{hk}^2)^{-1}$ into \eqref{eq_gb_def}
We can quantify the quality of the admittance estimate using bounds on $E[|\hat g_{hk}|]$ and $E[|\hat b_{hk}|]$, as the signs are known\footnote{From the line model in Section \ref{subsec_model}, $r_{hk}$ and $x_{hk}$ are positive. One can enforce the signs of the conductance and susceptance by setting $\hat g_{hk} = 0$ if the result was negative and $\hat b_{hk} = 0$ if it was positive.}. To do so, we rewrite \eqref{eq_gb_def} as
\begin{align*}
|g_{hk}| &= r_{hk}(x_{hk}^2 + r_{hk}^2)^{-1} + x_{hk} \cdot 0, \\
|b_{hk}| &= r_{hk} \cdot 0 + x_{hk}(x_{hk}^2 + r_{hk}^2)^{-1}.
\end{align*}
Plugging the functions $f_1(r,x) = (x^2 + r^2)^{-1}$ and $f_2(r,x) = 0$, which are convex on the positive quadrant, into a sharp\footnote{the equality holds for at least one realization of $\hat r_{hk}$ and $\hat x_{hk}$} Jensen-like inequality \cite[Theorem 2.1]{multivariate_jensen} gives the bounds\footnote{Upper bounds were recently discovered for univariate functions \cite{jensen_sharp}, however there are no upper bounds yet for bivariate problems such as \eqref{eq_gb_def}}
\begin{align*}
%\!\!\!\!
E[|\hat g_{hk}|] \!&\geq\! E[\hat r_{hk}] f_1 \!\!\lt(\! 
\frac{\mt{var}[\hat r_{hk}]}{E[\hat r_{hk}]} \!+\! E[\hat r_{hk}]
,\!
\frac{\mt{cov}[\hat x_{hk} ,\! \hat r_{hk}]}{E[\hat r_{hk}]} \!+\! E[\hat x_{hk}] 
\!\!\rt)\!\!,
\\%\!\!\!\!
E[|\hat b_{hk}|] \!&\geq\! E[\hat x_{hk}] f_1 \!\!\lt(\! 
\frac{\mt{cov}[\hat x_{hk} ,\! \hat r_{hk}]}{E[\hat x_{hk}]} \!+\! E[\hat r_{hk}]
,\!
\frac{\mt{var}[\hat x_{hk}]}{E[\hat x_{hk}]} \!+\! E[\hat x_{hk}]
\!\!\rt)\!\!,
\end{align*}
where the variance of the estimates $\mt{var}[\hat r_{hk}, \hat x_{hk}]$ is given by \eqref{eq_rx_variance}. Using the definition of $f_1(r,x) = (r^2 + x^2)^{-1}$ and noting that $E[\hat r_{hk}] = r_{hk}$ and $E[\hat x_{hk}] = x_{hk}$, one gets
\begin{subequations}\label{eq_gb_expect_def}
\begin{align}
\!\!\!\!E[|\hat g_{hk}|] \!&\geq\! \frac{r_{hk}^3}{(\mt{var}[\hat r_{hk}] \!+\! r_{hk}^2)^2 \!+\! (\mt{cov}[\hat x_{hk}, \hat r_{hk}] \!+\! r_{hk}x_{hk})^2},\!\!\!
\\
\!\!\!\!E[|\hat b_{hk}|] \!&\geq\!\! \frac{x_{hk}^3}{(\mt{var}[\hat x_{hk}] \!+\! x_{hk}^2)^2 \!+\! (\mt{cov}[\hat x_{hk}, \hat r_{hk}] \!+\! r_{hk}x_{hk})^2}.\!\!\!
\end{align}
\end{subequations}
Because of the consistency of the TLS estimate, $\mt{var}[\hat r_{hk}, \hat x_{hk}] \rightarrow 0$ as $N \rightarrow \infty$. This means that $\hat g_{hk}$ and $\hat b_{hk}$ can be asymptotically unbiased because the RHS of \eqref{eq_gb_expect_def} tends to \eqref{eq_gb_def}. However, with a finite number of samples, $\mt{var}[\hat r_{hk}, \hat x_{hk}]$ can remain quite high, which may heavily bias the admittance estimate.

\section{Network identification}\label{section_network}

Often, DSOs are interested in estimating the model for their full network. Ideally, the model could be estimated from smart meter injection measurements throughout the network. When a grid is radial and the topology is known, the network estimation problem for the full network can be decomposed into individual line parameter estimation problems for each line using Kirchoff's Current Law. However, this is not possible for mesh networks or radial networks with unknown topologies. Thus, it is desirable to have a method for estimating the grid parameters for a full network\footnote{The ``full'' network that is estimated consists of the Kron-reduced network connecting just the nodes at which the injections are measured. Note, the Kron-reduction is different than the statistical reductions that are used elsewhere in this paper. The unmeasured injections are treated as noise \cite{dorfler2018electrical, moffat_2019}.} from just smart meter voltage and injection measurements.

% More general types of networks such as meshed and unknown topology networks require line flow sensors to perform single line parameter estimation. Alternatively, one can identify the full network from nodal currents and voltages directly. However, 
Estimating a network from just smart meter voltage and injection measurements is a much more challenging task than estimating just a single line, however, and thus requires the following additional approximation:
\begin{align*}
    % v_h \approx |v_h|(1-j\theta_h) \quad &\mt{: limited development}, \\
    i_h \approx |i_h|e^{-j\phi_h} \hspace{26pt} &\mt{: neglect current phase shifts}.
\end{align*}
Such approximations are quite accurate under Assumption \ref{ass_small_theta}, as shown in \cite{zhang_non_pmu_id}. 
With the aforementioned approximations, the Kirchhoff law at node $h$ is given by $|i_h|e^{-j\phi_h} = \sum_{k=1}^n y_{hk}(|v_h|(1-j\theta_h) - |v_k|(1-j\theta_k))$. In matrix form and for all $h \in \mc V$, this gives
\begin{subequations}\label{eq_net_flow_lin_real}
\begin{align}\label{eq_net_flow_lin_real_cos}
    I^\Re &= (G|V|-B(|V|\Theta)),
    \\
    I^\Im &= -(B|V|+G(|V|\Theta)),
    \label{eq_net_flow_lin_real_sin}
\end{align}
\end{subequations}
where $I^\Re = [|i_h| \cos(\phi_h)]_{h=1}^n$, $I^\Re = [|i_h| \sin(\phi_h)]_{h=1}^n$, $|V| = [|v_h|]_{h=1}^n$, and $ \Theta = [|\theta_h|]_{h=1}^n$. The parameter estimation problem consists in finding $G \in \bb R^{n\times n}$ and $B \in \bb R^{n\times n}$ from noisy measurements of the current magnitudes, voltage magnitudes and power angles.

% Estimating the full admittance matrix from nodal currents and voltages is much more challenging than identifying the parameter of one line, due to the larger number of parameters as well as the correlations between the voltage and current profiles across nodes. However, a bias cancellation similar to the one presented in Section \ref{section_one_line} can be followed to minimize the impact of the abscence of phase measurements.

\subsection{Reduced Regression Model}
Similar to \cite{ognjen_pseudoinverse}, we use pseudo-inverses to avoid the problems caused by the singularity of $Y$ in the absence of shunt elements. Recall that if $Y$ is invertible, its pseudo-inverse is equal to its inverse. We first note that $B G^\dagger G = B$ because the matrices have the same null space\footnote{For this reason, the equality also holds for $G B^\dagger B = G$ and other combinations of these matrices, as well as their sums and products.} defined by the vector of all ones and are both rank $n-1$. Hence, multiplying \eqref{eq_net_flow_lin_real_sin} by $B G^\dagger$ on both sides, one can rewrite \eqref{eq_net_flow_lin_real} as
\begin{align*}
    B(|V|\theta) &= (G|V| - I^\Re),
    \\
    B(|V|\theta) &= -(B G^\dagger B|V|+ B G^\dagger I^\Im).
\end{align*}
Combining the two equations yields the reduced model
\begin{align}\label{eq_net_flow_lin_real_pseudo}
    (G + B G^\dagger B)|V| &=  I^\Re -  B G^\dagger I^\Im.
\end{align}
This model was first presented in \cite{zhang_non_pmu_id}, where the authors regress $|V|$ and $I^\Im$ onto $I^\Re$ to obtain $G = (G + B G^\dagger B)(GG^\dagger + (B G^\dagger)^2)^\dagger$, and similarly for $B$. However, as explained in Section \ref{subsec_endo}, the regression \eqref{eq_net_flow_lin_real_pseudo} suffers from a correlation between $I^\Re$ and $I^\Im$, which makes the estimation of $B G^\dagger$ very challenging.

\subsection{Equivalent Impedance}
In order to avoid the bias caused by the correlation between $I^\Re$ and $I^\Im$, we define the matrix $O$ such that $O_{hh} = \frac{n-1}{n}$ and $O_{hk} = \frac{1}{n}$. This matrix has the same null space as $G$ and $B$ and has rank $n-1$. Moreover, the matrix $(G + B G^\dagger B)$ also has the same null space and rank, which means that $O (G + B G^\dagger B)^\dagger (G + B G^\dagger B) = O$. Moreover, by construction $O (G + B G^\dagger B)^\dagger = (G + B G^\dagger B)^\dagger$. We can therefore multiply both sides of \eqref{eq_net_flow_lin_real_pseudo} by $O (G + B G^\dagger B)^\dagger$ to obtain
\begin{align*}
    O|V| &=  (G + B G^\dagger B)^\dagger I^\Re -  (G + B G^\dagger B)^\dagger B G^\dagger I^\Im.
\end{align*}
Finally, because $(G + B G^\dagger B)^\dagger B G^\dagger = (B + G B^\dagger G)^\dagger$, similar to \eqref{eq_line_flow_z_lin_real_reduced_simple} and \eqref{eq_rx_def}, one obtains the regression
\begin{align}\label{eq_impedance_regression}
    O|V| &=  R I^\Re + X I^\Im,
\end{align}
where
\begin{align}\label{eq_RX_def}
    \begin{array}{ll}
    \,R =  (G + B G^\dagger B)^\dagger,
    \\
    X = -(B + G B^\dagger G)^\dagger,
    \end{array}
    \Leftrightarrow
    \begin{array}{ll}
    G =  (R + X R^\dagger X)^\dagger,
    \\
    B = -(X + R X^\dagger R)^\dagger.
    \end{array}
\end{align}
Estimating $R$ and $X$ using the TLS is consistent, unbiased and efficient for the same reasons as in the single line problem. 
%
% \subsection{Inverse Transformation}
% In this section, we explain how to obtain the admittance matrix $Y=G+jB$ from \eqref{eq_RX_def}. We first start with the equality
% \begin{align}\label{eq_matrix_rx_ratio}
%     R^\dagger X &= -(B G^\dagger B + G) (B + G B^\dagger G)^\dagger, 
%     \nonumber \\
%     &= - B G^\dagger (B + G B^\dagger G) (B + G B^\dagger G)^\dagger = -B G^\dagger.
% \end{align}
% Additionally, the pseudo-inverse of \eqref{eq_matrix_rx_ratio} yields $X^\dagger R = - GB^\dagger$. Hence,
% \begin{subequations}
% \begin{align}
%     G &= (B G^\dagger + G B^\dagger)^\dagger (B G^\dagger + G B^\dagger) G,
%     \nonumber \\
%     &= (R^\dagger X + X^\dagger R)^\dagger (-B - G B^\dagger G),
%     \nonumber \\
%     &= (R^\dagger X + X^\dagger R)^\dagger (X^\dagger) = (X R^\dagger X + R)^\dagger,
%     \\
%     B &= (B G^\dagger + G B^\dagger)^\dagger (B G^\dagger B + G),
%     \nonumber \\
%     &= -(R^\dagger X + X^\dagger R)^\dagger (R^\dagger) = -(X + R X^\dagger R)^\dagger,
% \end{align}
% \end{subequations}
Similar to Section \ref{subsec_xr_regression}, the pseudo-inversion of the estimates of $R$ and $X$ lead to an asymptotic bias. It can also be bounded using \cite[Theorem 2.1]{multivariate_jensen}. The exact expressions are very long and are outside of the scope of this paper. Nevertheless, we show the accuracy of the estimates numerically in the next section.

\section{Unbalanced three-phases identification}\label{section_threePhase}
The parameter estimation of unbalanced three-phase networks can be done using the same methods as for the single-phase ones by stacking the voltages and current matrices as $V_{|} = [V_k]_{k=a,b,c} \in \bb R^{3n}$ and similarly for $I^\Re_{|}$ and $I^\Im_{|}$. However, this greatly reduces the SNR as the phases are often heavily correlated with one another. In this case, the voltage difference between the phases is too small to be detected. There are two main ways to simplify the problem:
\begin{enumerate}[label=(\roman*)]
    \item Assume that all self-admittances are equal and that all mutual admittances are equal.
    \item Assume that all self-admittances are equal and that all mutual admittances are zero.
\end{enumerate}

The simplification (i) still tries to estimate the mutual admittances and is adapted for heavily unbalanced loads. It is equivalent to the assumption that the network infrastructure is balanced, while the loads are not. This assumption is similar to neglecting the negative and zero sequences and holds quite well if all phases follow the same path, even if they are not transposed \cite{tsg_bayesian}. The resulting regression can be written as
\begin{align}
    O|V_{abc}| &=  R_L I^\Re_{abc} + X_L I^\Im_{abc} + R_M 
    \\ \nonumber &\quad\quad
    + (I^\Re_{cab} + I^\Re_{bca}) + X_M (I^\Im_{cab} + I^\Im_{bca}) 
\end{align}
where $V_{abc} = [V_a, V_b, V_c] \in \bb R^{n \times 3}$ and similarly for $I^\Re_{abc}$, $I^\Re_{cab}$, $I^\Re_{bca}$, $I^\Im_{abc}$, $I^\Im_{cab}$, and $I^\Im_{bca}$.

The simplification (ii) holds for balanced enough systems, when the influence of mutual components cancels out. The regression \eqref{eq_impedance_regression} can be used with $V_{abc}$, $I^\Re_{abc}$, and $I^\Im_{abc}$ in place of $V$, $I^\Re$, and $I^\Im$, respectively. The number of parameters to estimate is reduced, therefore increasing the data-efficiency. However, it may not be applicable in networks with large single-phase loads.

% \cite{wang2020assessing} % possible citation for this section: PSCC paper from 2020 on 3-phase ID from sensor measurements

\section{Experiments} \label{section_experiments}
% \subsection{Experimental Setup}
We validate our results on the 16kV distribution grid of Walenstadt, Switzerland.
Not all the nodes are observed/have smart meter measurements. %, so we chose two locations to study. 
We first use the line from the substation to Schlittriet to test the single-line identification methods. %, as it is the only line possessing both a line flow meter and voltage magnitude measurements on both ends. 
We then experiment with network identification using the Br{\"u}sis-to-Freihof six-node sub-network.
A groundtruth model of the network is not available so we assess the accuracy of the estimation methods using a combination of intuition, cross-validation, and synthetic data, as explained below.

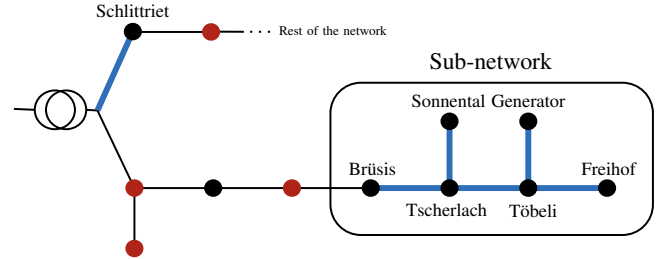
\begin{figure}[h]
    %\centering
    \tikzset{every picture/.style={line width=0.75pt}} %set default line width to 0.75pt        

\begin{tikzpicture}[x=0.75pt,y=0.75pt,yscale=-1,xscale=1]
%uncomment if require: \path (0,144); %set diagram left start at 0, and has height of 144

%Straight Lines [id:da6558962877120083] 
\draw [color={rgb, 255:red, 47; green, 111; blue, 185 }  ,draw opacity=1 ][line width=2.25]    (187,95.5) -- (226.73,95.5) ;
%Straight Lines [id:da9053963283273991] 
\draw [color={rgb, 255:red, 47; green, 111; blue, 185 }  ,draw opacity=1 ][line width=2.25]    (226.73,95.5) -- (226.73,61.79) ;
%Straight Lines [id:da016465132617519807] 
\draw [color={rgb, 255:red, 47; green, 111; blue, 185 }  ,draw opacity=1 ][line width=2.25]    (226.73,95.5) -- (266.46,95.5) ;
%Straight Lines [id:da618945587223229] 
\draw [color={rgb, 255:red, 47; green, 111; blue, 185 }  ,draw opacity=1 ][line width=2.25]    (266.46,95.5) -- (306.19,95.5) ;
%Straight Lines [id:da9709583557914909] 
\draw [color={rgb, 255:red, 47; green, 111; blue, 185 }  ,draw opacity=1 ][line width=2.25]    (266.46,95.5) -- (266.46,61.79) ;
%Shape: Ellipse [id:dp9028096505836558] 
\draw  [fill={rgb, 255:red, 0; green, 0; blue, 0 }  ,fill opacity=1 ] (103.49,95.5) .. controls (103.49,93.34) and (105.3,91.58) .. (107.54,91.58) .. controls (109.78,91.58) and (111.6,93.34) .. (111.6,95.5) .. controls (111.6,97.65) and (109.78,99.41) .. (107.54,99.41) .. controls (105.3,99.41) and (103.49,97.65) .. (103.49,95.5) -- cycle ;
%Shape: Ellipse [id:dp4364698347196452] 
\draw  [fill={rgb, 255:red, 0; green, 0; blue, 0 }  ,fill opacity=1 ] (182.95,95.5) .. controls (182.95,93.34) and (184.76,91.58) .. (187,91.58) .. controls (189.24,91.58) and (191.06,93.34) .. (191.06,95.5) .. controls (191.06,97.65) and (189.24,99.41) .. (187,99.41) .. controls (184.76,99.41) and (182.95,97.65) .. (182.95,95.5) -- cycle ;
%Shape: Ellipse [id:dp9583749142267532] 
\draw  [fill={rgb, 255:red, 0; green, 0; blue, 0 }  ,fill opacity=1 ] (222.68,95.5) .. controls (222.68,93.34) and (224.49,91.58) .. (226.73,91.58) .. controls (228.97,91.58) and (230.79,93.34) .. (230.79,95.5) .. controls (230.79,97.65) and (228.97,99.41) .. (226.73,99.41) .. controls (224.49,99.41) and (222.68,97.65) .. (222.68,95.5) -- cycle ;
%Shape: Ellipse [id:dp04272384198665802] 
\draw  [fill={rgb, 255:red, 0; green, 0; blue, 0 }  ,fill opacity=1 ] (262.41,95.5) .. controls (262.41,93.34) and (264.22,91.58) .. (266.46,91.58) .. controls (268.7,91.58) and (270.52,93.34) .. (270.52,95.5) .. controls (270.52,97.65) and (268.7,99.41) .. (266.46,99.41) .. controls (264.22,99.41) and (262.41,97.65) .. (262.41,95.5) -- cycle ;
%Shape: Ellipse [id:dp5492126512348832] 
\draw  [fill={rgb, 255:red, 0; green, 0; blue, 0 }  ,fill opacity=1 ] (222.68,61.79) .. controls (222.68,59.63) and (224.49,57.88) .. (226.73,57.88) .. controls (228.97,57.88) and (230.79,59.63) .. (230.79,61.79) .. controls (230.79,63.95) and (228.97,65.7) .. (226.73,65.7) .. controls (224.49,65.7) and (222.68,63.95) .. (222.68,61.79) -- cycle ;
%Shape: Ellipse [id:dp3219564990351462] 
\draw  [fill={rgb, 255:red, 0; green, 0; blue, 0 }  ,fill opacity=1 ] (262.41,61.79) .. controls (262.41,59.63) and (264.22,57.88) .. (266.46,57.88) .. controls (268.7,57.88) and (270.52,59.63) .. (270.52,61.79) .. controls (270.52,63.95) and (268.7,65.7) .. (266.46,65.7) .. controls (264.22,65.7) and (262.41,63.95) .. (262.41,61.79) -- cycle ;
%Shape: Ellipse [id:dp44222739500189046] 
\draw  [fill={rgb, 255:red, 0; green, 0; blue, 0 }  ,fill opacity=1 ] (302.14,95.5) .. controls (302.14,93.34) and (303.95,91.58) .. (306.19,91.58) .. controls (308.43,91.58) and (310.25,93.34) .. (310.25,95.5) .. controls (310.25,97.65) and (308.43,99.41) .. (306.19,99.41) .. controls (303.95,99.41) and (302.14,97.65) .. (302.14,95.5) -- cycle ;
%Shape: Ellipse [id:dp5845520883885267] 
\draw   (23.22,56) .. controls (23.22,50.6) and (27.75,46.22) .. (33.35,46.22) .. controls (38.95,46.22) and (43.49,50.6) .. (43.49,56) .. controls (43.49,61.4) and (38.95,65.78) .. (33.35,65.78) .. controls (27.75,65.78) and (23.22,61.4) .. (23.22,56) -- cycle ;
%Shape: Ellipse [id:dp6697366823555122] 
\draw   (17.54,56) .. controls (17.54,50.6) and (22.08,46.22) .. (27.68,46.22) .. controls (33.27,46.22) and (37.81,50.6) .. (37.81,56) .. controls (37.81,61.4) and (33.27,65.78) .. (27.68,65.78) .. controls (22.08,65.78) and (17.54,61.4) .. (17.54,56) -- cycle ;
%Straight Lines [id:da37319950178805494] 
\draw    (17.54,56) -- (7,55.54) ;
%Straight Lines [id:da31126647748861913] 
\draw    (49.16,56.32) -- (43.49,56) ;
%Straight Lines [id:da6274457366533281] 
\draw [color={rgb, 255:red, 47; green, 111; blue, 185 }  ,draw opacity=1 ][fill={rgb, 255:red, 0; green, 0; blue, 0 }  ,fill opacity=1 ][line width=2.25]    (67,16.51) -- (49.16,56.32) ;
%Straight Lines [id:da28607453461817944] 
\draw    (67.81,95.5) -- (49.16,56.32) ;
%Straight Lines [id:da9464896999655876] 
\draw    (67.81,126) -- (67.81,95.5) ;
%Straight Lines [id:da023756016522672674] 
\draw    (67.81,95.5) -- (107.54,95.5) ;
%Shape: Ellipse [id:dp14478361918035088] 
\draw  [color={rgb, 255:red, 175; green, 36; blue, 23 }  ,draw opacity=1 ][fill={rgb, 255:red, 175; green, 36; blue, 23 }  ,fill opacity=1 ] (63.76,95.5) .. controls (63.76,93.34) and (65.57,91.58) .. (67.81,91.58) .. controls (70.05,91.58) and (71.87,93.34) .. (71.87,95.5) .. controls (71.87,97.65) and (70.05,99.41) .. (67.81,99.41) .. controls (65.57,99.41) and (63.76,97.65) .. (63.76,95.5) -- cycle ;
%Shape: Ellipse [id:dp7295749063481014] 
\draw  [color={rgb, 255:red, 175; green, 36; blue, 23 }  ,draw opacity=1 ][fill={rgb, 255:red, 175; green, 36; blue, 23 }  ,fill opacity=1 ] (63.76,126) .. controls (63.76,123.84) and (65.57,122.09) .. (67.81,122.09) .. controls (70.05,122.09) and (71.87,123.84) .. (71.87,126) .. controls (71.87,128.16) and (70.05,129.91) .. (67.81,129.91) .. controls (65.57,129.91) and (63.76,128.16) .. (63.76,126) -- cycle ;
%Straight Lines [id:da549406086132018] 
\draw    (107.54,95.5) -- (147.27,95.5) ;
%Straight Lines [id:da8361675391558911] 
\draw    (147.27,95.5) -- (187,95.5) ;
%Rounded Rect [id:dp5257614726675204] 
\draw   (166.67,57.7) .. controls (166.67,49.21) and (173.55,42.33) .. (182.04,42.33) -- (313.94,42.33) .. controls (322.43,42.33) and (329.31,49.21) .. (329.31,57.7) -- (329.31,103.81) .. controls (329.31,112.3) and (322.43,119.18) .. (313.94,119.18) -- (182.04,119.18) .. controls (173.55,119.18) and (166.67,112.3) .. (166.67,103.81) -- cycle ;
%Shape: Ellipse [id:dp08896218361851438] 
\draw  [color={rgb, 255:red, 175; green, 36; blue, 23 }  ,draw opacity=1 ][fill={rgb, 255:red, 175; green, 36; blue, 23 }  ,fill opacity=1 ] (143.22,95.5) .. controls (143.22,93.34) and (145.03,91.58) .. (147.27,91.58) .. controls (149.51,91.58) and (151.33,93.34) .. (151.33,95.5) .. controls (151.33,97.65) and (149.51,99.41) .. (147.27,99.41) .. controls (145.03,99.41) and (143.22,97.65) .. (143.22,95.5) -- cycle ;
%Shape: Ellipse [id:dp552906770459185] 
\draw  [fill={rgb, 255:red, 0; green, 0; blue, 0 }  ,fill opacity=1 ] (62.95,16.51) .. controls (62.95,14.35) and (64.76,12.6) .. (67,12.6) .. controls (69.24,12.6) and (71.05,14.35) .. (71.05,16.51) .. controls (71.05,18.67) and (69.24,20.42) .. (67,20.42) .. controls (64.76,20.42) and (62.95,18.67) .. (62.95,16.51) -- cycle ;
%Straight Lines [id:da8133806813704998] 
\draw    (123,16.61) -- (105.8,16.51) ;
%Straight Lines [id:da6689901041404382] 
\draw  [dash pattern={on 0.84pt off 2.51pt}]  (136.2,16.61) -- (123,16.61) ;
%Straight Lines [id:da6634621077348213] 
\draw    (66.74,16.7) -- (106.47,16.7) ;
%Shape: Ellipse [id:dp7353434234968592] 
\draw  [color={rgb, 255:red, 175; green, 36; blue, 23 }  ,draw opacity=1 ][fill={rgb, 255:red, 175; green, 36; blue, 23 }  ,fill opacity=1 ] (102.42,16.7) .. controls (102.42,14.54) and (104.23,12.78) .. (106.47,12.78) .. controls (108.71,12.78) and (110.53,14.54) .. (110.53,16.7) .. controls (110.53,18.85) and (108.71,20.61) .. (106.47,20.61) .. controls (104.23,20.61) and (102.42,18.85) .. (102.42,16.7) -- cycle ;

% Text Node
\draw (47.04,1.43) node [anchor=north west][inner sep=0.75pt]  [font=\scriptsize] [align=left] {Schlittriet};
% Text Node
\draw (174.46,80.87) node [anchor=north west][inner sep=0.75pt]  [font=\scriptsize] [align=left] {Brüsis};
% Text Node
\draw (203.3,101.5) node [anchor=north west][inner sep=0.75pt]  [font=\scriptsize] [align=left] {Tscherlach};
% Text Node
\draw (206.1,46.84) node [anchor=north west][inner sep=0.75pt]  [font=\scriptsize] [align=left] {Sonnental};
% Text Node
\draw (255.29,102.17) node [anchor=north west][inner sep=0.75pt]  [font=\scriptsize] [align=left] {Töbeli};
% Text Node
\draw (246.35,46.84) node [anchor=north west][inner sep=0.75pt]  [font=\scriptsize] [align=left] {Generator};
% Text Node
\draw (291.8,81.2) node [anchor=north west][inner sep=0.75pt]  [font=\scriptsize] [align=left] {Freihof};
% Text Node
\draw (215.1,25.84) node [anchor=north west][inner sep=0.75pt]  [font=\small] [align=left] {Sub-network};
% Text Node
\draw (139.24,12.4) node [anchor=north west][inner sep=0.75pt]  [font=\tiny] [align=left] {Rest of the network};

\end{tikzpicture}
    \caption{Graph of the region of interest in the Walenstadt network. %\protect\footnotemark. 
    The lines that will be identified are in blue. The unobserved nodes are in red.}
    \label{fig_network}
\end{figure}
% \footnotetext{A leaf node has been merged in "Freihof".}

\subsection{Single Line Estimation}\label{subsec_single_results}
\subsubsection{Real-World Data Single Line Estimation}
The line of interest connecting "Schlittriet" to the substation (see Fig. \ref{fig_network}) has a resistance of approximately $0.1\Omega$.\footnote{The linear resistivity is $114$m$\Omega/$km and the line is $880$m long. The reactance is much harder to compute as it depends on the environment of the line \cite{carson1927electromagnetic}.} Its reactance is however unknown. 
We first use the raw data into all four methods described in Section \ref{section_biases} to obtain the results in Fig. \ref{fig_real_results}. In the rest of the section, we use the sub-figure indices (a)-(d) of Fig. \ref{fig_real_results} to denote the corresponding methods.

\begin{figure}[h]
    %\centering
    % This file was created with tikzplotlib v0.9.15.
\begin{tikzpicture}[scale=0.45]
\pgfplotsset{/pgfplots/group/.cd,
    horizontal sep=0.1cm,
    vertical sep=0.5cm
}
\pgfplotsset{
    y coord trafo/.code={\pgfmathparse{ifthenelse(#1<0,
    -(abs(#1))^(1/3) ,
    abs(#1)^(1/3)
    )}},
    y coord inv trafo/.code={\pgfmathparse{(#1)^3}}
 }

\definecolor{color0}{HTML}{AF2417}
\definecolor{color1}{HTML}{D0A900}
\definecolor{color2}{HTML}{2E6D1C}
\definecolor{color3}{HTML}{2F6FB9}

\begin{groupplot}[group style={group size=4 by 2}, width=\textwidth/3, height=5cm, max space between ticks=20pt,
ylabel absolute, every axis y label/.append style={yshift=-0.3cm}]

%PASTE FROM HERE, then replace "ymajorgrids" by "ymajorgrids, ytick={-16,-10,-4,-1,0,1,4,10,16}," where there is a "ylabel=" (plots 1 and 5) and "ymajorgrids, ytick={-16,-10,-4,-1,0,1,4,10,16}, yticklabel=\empty," otherwise.

\nextgroupplot[
tick pos=left,
xmajorgrids,
xmin=-0.15, xmax=3.15,
xtick style={color=black},
xtick={0,1,2,3},
xticklabels={7,3,2,1},
ylabel={g [S]},
ymajorgrids, ytick={-16,-10,-4,-1,0,1,4,10,16},
ymin=-1, ymax=17,
ytick style={color=black}
]
\addplot [draw=color0, fill=color0, mark=*, only marks]
table{%
x  y
0 9.54357612540022
1 12.6574135048312
1 7.44605520428778
2 10.4036328519783
2 12.319900647989
2 7.02619056490721
3 1.68379967259779
3 10.2204110917906
3 15.6268383097752
3 14.7464822350758
3 6.28998766513944
3 8.70442918876291
3 6.54826705910609
};
\addplot [semithick, color0]
table {%
0 9.54357612540022
1 10.0517343545595
2 9.91657468829152
3 9.11717360317826
};

\nextgroupplot[
tick pos=left,
xmajorgrids,
xmin=-0.15, xmax=3.15,
xtick style={color=black},
xtick={0,1,2,3},
xticklabels={7,3,2,1},
ymajorgrids, ytick={-16,-10,-4,-1,0,1,4,10,16}, yticklabel=\empty,
ymin=-1, ymax=17,
ytick style={color=black}
]
\addplot [draw=color1, fill=color1, mark=*, only marks]
table{%
x  y
0 0.000638837310729344
1 0.00539652124142492
1 -0.00462251550158148
2 -0.00967089246595499
2 0.0096045514144248
2 0.000790043683874281
3 -0.0593720665662654
3 -0.00420853036660674
3 0.0232676023864187
3 0.000729838236809925
3 -0.0233744720659089
3 -0.0185138699873914
3 0.0225418019589374
};
\addplot [semithick, color1]
table {%
0 0.000638837310729344
1 0.000387002869921719
2 0.000241234210781366
3 -0.0084185280577152
};

\nextgroupplot[
tick pos=left,
xmajorgrids,
xmin=-0.15, xmax=3.15,
xtick style={color=black},
xtick={0,1,2,3},
xticklabels={7,3,2,1},
ymajorgrids, ytick={-16,-10,-4,-1,0,1,4,10,16}, yticklabel=\empty,
ymin=-1, ymax=17,
ytick style={color=black}
]
\addplot [draw=color2, fill=color2, mark=*, only marks]
table{%
x  y
0 0.38756661620693
1 0.232791089188446
1 6.95219193738733
2 0.212503283752191
2 0.222130807648117
2 4.99696174813751
3 1.75356668330252
3 5.72381643912189
3 3.12749502510901
3 0.555893076582948
3 -0.00468572169651957
3 0.00702828977281402
3 0.0366481039208962
};
\addplot [semithick, color2]
table {%
0 0.38756661620693
1 3.59249151328789
2 1.81053194651261
3 1.59996598515908
};

\nextgroupplot[
tick pos=left,
xmajorgrids,
xmin=-0.15, xmax=3.15,
xtick style={color=black},
xtick={0,1,2,3},
xticklabels={7,3,2,1},
ymajorgrids, ytick={-16,-10,-4,-1,0,1,4,10,16}, yticklabel=\empty,
ymin=-1, ymax=17,
ytick style={color=black}
]
\addplot [draw=color3, fill=color3, mark=*, only marks]
table{%
x  y
0 9.7384249826778
1 12.9878041024377
1 7.52924610294814
2 10.9544717008402
2 12.4075119614076
2 7.11943831543386
3 2.07217542918334
3 10.3711651765617
3 15.7126618979316
3 14.7999460745488
3 6.30003249835157
3 8.82057674798427
3 6.58423256024556
};
\addplot [semithick, color3]
table {%
0 9.7384249826778
1 10.2585251026929
2 10.1604739925606
3 9.23725576925813
};

\nextgroupplot[
tick pos=left,
xlabel={(a)},
xmajorgrids,
xmin=-0.15, xmax=3.15,
xtick style={color=black},
xtick={0,1,2,3},
xticklabels={7,3,2,1},
ylabel={b [S]},
ymajorgrids, ytick={-16,-10,-4,-1,0,1,4,10,16},
ymin=-12, ymax=12,
ytick style={color=black}
]
\addplot [draw=color0, fill=color0, mark=*, only marks]
table{%
x  y
0 -1.03563494646558
1 -1.32499477207647
1 -0.831381471564943
2 -1.3465640663814
2 -0.809544984425639
2 -0.838839960695075
3 -1.25603393822687
3 -1.3220327884153
3 -10.7318289176668
3 -0.827100007307507
3 -0.775632444123561
3 -0.834703072195239
3 -0.841098407386889
};
\addplot [semithick, color0]
table {%
0 -1.03563494646558
1 -1.07818812182071
2 -0.998316337167373
3 -2.36977565361746
};

\nextgroupplot[
tick pos=left,
xlabel={(b)},
xmajorgrids,
xmin=-0.15, xmax=3.15,
xtick style={color=black},
xtick={0,1,2,3},
xticklabels={7,3,2,1},
ymajorgrids, ytick={-16,-10,-4,-1,0,1,4,10,16}, yticklabel=\empty,
ymin=-12, ymax=12,
ytick style={color=black}
]
\addplot [draw=color1, fill=color1, mark=*, only marks]
table{%
x  y
0 -0.465768266733595
1 -0.487060966223934
1 -0.446438217985333
2 -0.373471413369638
2 -0.563797386342945
2 -0.459041416675826
3 -0.112260286366174
3 -0.439730917877796
3 -0.633620984858824
3 -0.553125551499153
3 -0.367828848989577
3 -0.406929972567258
3 -0.536417570625598
};
\addplot [semithick, color1]
table {%
0 -0.465768266733595
1 -0.466749592104633
2 -0.465436738796136
3 -0.435702018969197
};

\nextgroupplot[
tick pos=left,
xlabel={(c)},
xmajorgrids,
xmin=-0.15, xmax=3.15,
xtick style={color=black},
xtick={0,1,2,3},
xticklabels={7,3,2,1},
ymajorgrids, ytick={-16,-10,-4,-1,0,1,4,10,16}, yticklabel=\empty,
ymin=-12, ymax=12,
ytick style={color=black}
]
\addplot [draw=color2, fill=color2, mark=*, only marks]
table{%
x  y
0 -2.12903338071442
1 -1.88863807620154
1 1.60548699355862
2 -1.83029915580433
2 -1.83255349469051
2 2.96400594711467
3 -0.149831868090641
3 4.80741867555744
3 6.21527966899918
3 -3.09451601535966
3 -0.737198531381987
3 -0.880498873655695
3 0.369889975318642
};
\addplot [semithick, color2]
table {%
0 -2.12903338071442
1 -0.141575541321461
2 -0.232948901126723
3 0.932934718769612
};

\nextgroupplot[
tick pos=left,
xlabel={(d)},
xmajorgrids,
xmin=-0.15, xmax=3.15,
xtick style={color=black},
xtick={0,1,2,3},
xticklabels={7,3,2,1},
ymajorgrids, ytick={-16,-10,-4,-1,0,1,4,10,16}, yticklabel=\empty,
ymin=-12, ymax=12,
ytick style={color=black}
]
\addplot [draw=color3, fill=color3, mark=*, only marks]
table{%
x  y
0 -0.532905301534644
1 -0.521935261817151
1 -0.538905067842831
2 -0.935756828137912
2 -0.408471720900493
2 -0.460575208818599
3 -1.26393297528679
3 -0.547470611037871
3 -0.063999550084363
3 -0.564058164157673
3 -0.786790215145341
3 -0.932294629807294
3 -0.244600313400803
};
\addplot [semithick, color3]
table {%
0 -0.532905301534644
1 -0.530420164829991
2 -0.601601252619002
3 -0.629020922702876
};
\end{groupplot}

\draw ({$(current bounding box.south west)!0.5!(current bounding box.south east)$}|-{$(current bounding box.south west)!-0.07!(current bounding box.north west)$}) node[
  scale=0.495,
  anchor=south,
  text=black,
  rotate=0.0
]{ sample size [days]};
\end{tikzpicture}
    \caption{Estimates of the conductance and susceptance of a single line using (a) the admittance regression \eqref{eq_simple_ls_regression_noisy}, (b) the joint estimation using \eqref{eq_simple_ls_regression_noisy_phase}, (c) the reduced model
\eqref{eq_line_flow_lin_real_reduced}, (d) the impedance regression
\eqref{eq_line_flow_z_lin_real_reduced_simple} with real data. For each method, the points mark the estimates using each data subset and the line their average. 
The true conductance $b$ and susceptance $g$ of the line is not known, but $R \approx 0.1 \Omega$.} % For clarity, the y-axis has a non-linear scale.
    \label{fig_real_results}
\end{figure}

We assess the consistency of the estimates with $R = 0.1\Omega$, i.e., our only ground truth, by inverting the 7-days estimate using \eqref{eq_rx_def}. We observe that (a), (c), and (d) are consistent with $R = 0.1\Omega$, but (b) yields $y = 0.003 -0.46j$ S, which corresponds to an incorrect $R = 0.01\Omega$. Moreover, (b) and (c) estimates a much larger X/R ratio than all three other methods. 

Because the true conductance and susceptance of the line is not known, we use cross validation in order to determine the accuracy of the methods. We split the data into 6 days for estimating the parameters, and 1 day to validate the line flow power predictions that use the parameters.
% \footnote{If the data set is too small, on can also compare the residuals of the regression.}. 
% Comparing the predictions with the measurements should reveal the performance of the methods. 
Table \ref{table_crossvalid} shows that (a) provides a good estimate for only $P$, (b) provides good estimates for only $Q$, and (c) does not provide a good estimate. Thus, only (d) could approximate both $P$ and $Q$ correctly.

\begin{table}[h]
    \centering
    \begin{tabular}{|cc|c|c|}
        \hline
        & method & P [kW] & Q [kVA]
        \\ \hline
        & measured & 107.49 & 5.19 
        \\
        (a) & admittance \eqref{eq_simple_ls_regression_noisy} & 138.69 & 13.34
        \\
        (b) & joint \eqref{eq_simple_ls_regression_noisy_phase} & 0.044 & 6.98
        \\
        (c) & reduced \eqref{eq_line_flow_lin_real_reduced} & 15.62 & 47.08
        \\
        (d) & impedance \eqref{eq_line_flow_z_lin_real_reduced_simple} & 140.34 & 6.63
        \\ \hline
    \end{tabular}
    \vspace{2pt}
    \caption{Measured and predicted average power transmission in the estimated line for the last day of the data.}
    \label{table_crossvalid}
\end{table}

\subsubsection{Synthetic Data Single Line Estimation}
In order to understand better the very wide range of estimates, we generate synthetic data by computing the voltages that would perfectly match the measured power for all 7 days and $y = 10 - 0.5j$ S. %\footnote{Since the cross validation results of (d) are encouraging, we use this estimate.}. 
We then add an increasing amount of Gaussian noise to observe how the estimates behave and compute the expected biases from Section \ref{section_biases}.

\begin{figure}[h]
    %\centering
    \input{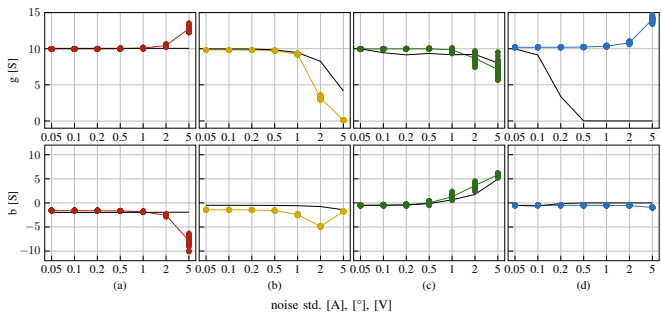}
    \caption{Estimates of the conductance and susceptance of a single line using the four methods (a)-(d) with synthetic data. For each method, the points mark the estimates affected by one of the 50 realizations of the noise generated for each noise level. The line in color shows their average and the black line shows the predicted bias.  In case (d) the black line is a lower bound on the bias.}
    \label{fig_synthetic_results}
\end{figure}

Fig. \ref{fig_synthetic_results} shows that all methods perform much better with synthetic data. However, the $g$ estimates in (b) and (c) get smaller with more noise and the $b$ estimate in (a) is too large by a factor of $2$. Both of these errors can also be observed in Fig. \ref{fig_real_results}. Moreover, the $b$ estimate in (c) changes sign and becomes larger. We also observe that the bias predictions become worse with large amounts of noise, probably because Assumption \ref{ass_small_deltaphi} does not hold anymore. Moreover, the approximation of $E[b]$ for (b) given in Section \ref{subsec_simultaneity} is not very close, and the lower bound \eqref{eq_gb_expect_def} for (d) is very loose in this particular case.

\subsection{Network Estimation}
\subsubsection{Real-World Data Network Estimation}
We use V, P and Q measurements from the sub-network indicated in Fig. \ref{fig_network} in order to estimate its admittance matrix. The singular values of $[V, I^\Re, I^\Im]$ reveal a SNR of approximately 1, which makes the estimation challenging. In order to avoid numerical stability issues\footnote{In high SNR scenarios, the TLS may fail to estimate the EIV correctly and remove parts of the signal instead. While the OLS is biased, it is more robust than the TLS.}, we test the TLS-based methods with OLS as well. The estimation with all 7 days of data yields the following results for each method.
\begin{enumerate}[label=(\alph*)]
    \item from \cite{lisa2023} diverges with both OLS and TLS, yielding a quasi-infinite admittance estimate.
    \item from \cite{zhang_non_pmu_newton} converges to zero. %, which would mean that the network is fully disconnected.
    \item from \cite{zhang_non_pmu_id} diverges with the TLS but results in the matrix shown in Fig. \ref{fig_heatmaps} (c).
    \item using \eqref{eq_impedance_regression} yields the matrix shown in Fig. \ref{fig_heatmaps} (d) with the TLS and a similar one with the OLS.
\end{enumerate}
%\vspace{-25pt}
\begin{figure}[h]
     \centering
     \setcounter{subfigure}{2}
     \subfloat[]{
        \includegraphics[width=0.227\textwidth]{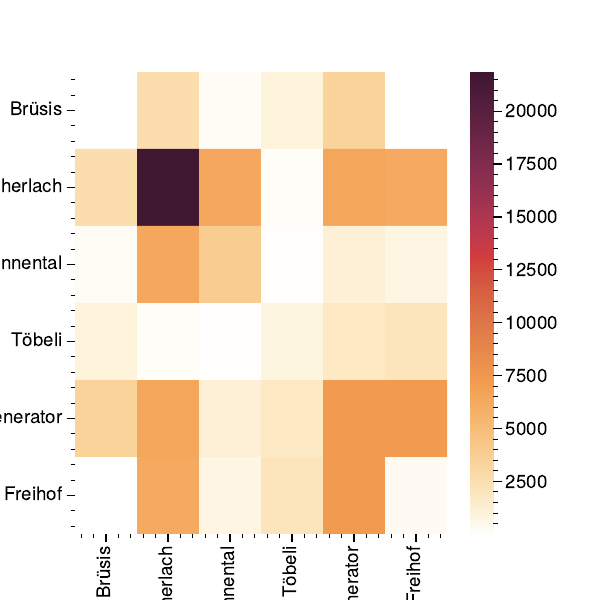}
        }
     \subfloat[]{
        \includegraphics[width=0.227\textwidth]{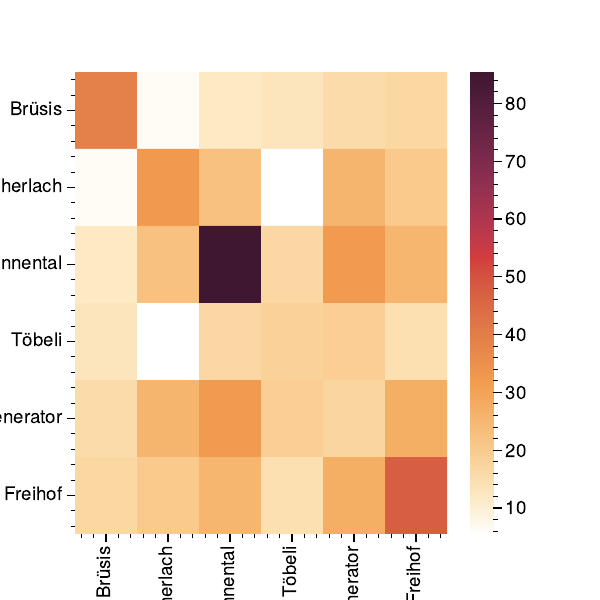}
        }
    \caption{Heatmaps of the magitudes of the admittance matrix estimates given by the methods (c) and (d).}
    \label{fig_heatmaps}
\end{figure}
The resistance of the lines should be similar to the one in Section \ref{subsec_single_results} given their length and cross-section. We note that only (d) is consistent with this observation. While the topology is not recognizable from Fig. \ref{fig_heatmaps} (d), the estimate may be useful for control and state estimation purposes. If it is not, the the data must be improved by installing more accurate smart meter sensors, PMUs, or lineflow measurements.

\section{Conclusion} \label{section_conclu}
% The statistical properties of unsychronized measuring devices installed on distribution grids result in very challenging conditions for admittance estimation. The methods developed previously address parts of the problem but without eliminating completely.

% Further experiments on other networks are needed to understand which ones are identifiable, if any. In this endeavour,@ the required accuracy of admittance matrix estimates should be clearly established for each application.

Estimating electrical grid parameters using unsychronized measuring devices is challenging. This paper quantifies this challenge by describing the non-zero expected admittance estimation error for four admittance estimation methods in the literature, as well as the variance of impedance estimation. Impedance estimation can be conducted without bias using TLS. The results in this paper warrant further impedance and admittance estimation experiments with real smart meter data, and experiments that combine smart meter data with data from other types of sensors such as uPMUs and lineflow measurements.

\section*{Acknowledgement}
We thank the Swiss National Science Foundation for supporting this research under the NCCR Automation project (grant number 51NF40 180545). We would also like to thank Benjamin Sawicki and WEW (the DSO of Walenstadt) for providing the data required by our experiments. Additionally, we thank Lisa Laurent and Paul Den\'e for the eye-opening results that they found during their Master projects.

\bibliographystyle{IEEEtran}
\bibliography{references}

\appendices

\section{Dilution Bias and Total Least Squares}\label{app_TLS}
% \section{Statistical tools}\label{subsec_tls}

We explore different methods to fit regression models such as \eqref{eq_line_flow_z_lin_real_reduced_simple} or \eqref{eq_simple_ls_regression_noisy_a}. In order to present the theory in various cases, we consider a general regression model $\tilde z = Ax + \epsilon$. For example, in \eqref{eq_line_flow_z_lin_real_reduced_simple} $z$ represents the voltage drop and $x$ represents the current. The matrix $A$ contains the parameters and the noise term $\epsilon$ is such that $E[\epsilon] = 0$. 

\subsection{Dilution Bias}
The OLS are a very common method for parameter estimation, which estimates $A$ as 
\begin{align}\label{eq_app_ols_def}
    \hat A = \mt{var}[x]^{-1} \mt{cov}[x, \tilde z].
\end{align}
This estimate is exact if (i) the variance and covariance estimates are exact and (ii) $\mt{cov}[x, \epsilon] = 0$, which means $\mt{cov}[x, \tilde z] = A \mt{var}[x]$. The condition (i) is usually almost satisfied with very large amounts of data, as the uncertainty on variance scales with the inverse of the number of samples. 

The condition (ii) is not satisfied in \eqref{eq_line_flow_z_lin_real_reduced_simple} and \eqref{eq_simple_ls_regression_noisy_a}, as the noises on $|\tilde v_h| - |\tilde v_k|$,  $|\tilde i_{hk}|\cos(\tilde \phi_{hk})$, and $|\tilde i_{hk}|\sin(\tilde \phi_{hk})$ are collected in $\epsilon_{hk}$, thus creating correlations. Independent uncertainty in the measurement $\tilde x$ of $x$, then $\mt{var}[\tilde x] = \mt{var}[x] + \mt{var}[\tilde x - x]$, so $\mt{var}[\tilde x]^{-1} \prec \mt{var}[x]^{-1}$, which leads to a dilution bias in the OLS \cite{markovsky_tls_overview}.

\subsection{Total Least Squares}\label{app_tls_bias} 
The TLS estimate the parameters $A$ and the exact variance $\mt{var}[x]$ simultaneously, assuming that $\epsilon = \epsilon_z + \epsilon_x \sim \mc N(0,\Sigma_z) + \mc N(0,A \Sigma_z A^\top)$, i.e., that the right and left hand side variables are i.i.d., Gaussian and centered on zero. The variance $\mt{var}[x]$ is computed as $\mt{var}[\tilde x] - \mt{var}[\epsilon_x]$, where $\mt{var}[\epsilon_x]$ is obtained using a low-rank approximation of $[x + \epsilon_x, z + \epsilon_z]$ \cite{markovsky_tls_overview}\footnote{The TLS can also be computed online using its recursive formulation \cite{rhode_rgtls}}.  In this case, \eqref{eq_app_ols_def} becomes
\begin{align}\label{eq_app_tls_def}
    \hat A = (\mt{var}[\tilde x] - \mt{var}[\epsilon_x])^{-1} \mt{cov}[\tilde x, \tilde z].
\end{align}
This estimate is unbiased if and only if $\mt{cov}[\tilde x, \tilde z] = \mt{cov}[x, z] + \mt{cov}[\epsilon_x, \tilde z] + \mt{cov}[\tilde x, \epsilon_z] - \mt{cov}[\epsilon_x, \epsilon_z] = \mt{cov}[x, z]$. Otherwise the bias is given by $(\mt{var}[\tilde x] - \mt{var}[\epsilon_x])^{-1} (\mt{cov}[\tilde x, \tilde z] - \mt{cov}[x, z])$, which is equal to \eqref{eq_tls_bias} when $D = 1$.

\subsection{Multiple Datasets}
If $D$ datasets $\tilde x_d, \tilde z_d$ of equal size such that $z_d = A x_d$ for $d = 1,\dots, D$ are combined, \eqref{eq_app_ols_def} becomes $\hat A = \big(\sum_{d=1}^D \mt{var}[x_d]\big)^{-1}\sum_{d=1}^D \mt{cov}[x_d, \tilde z_d]$. Moreover, similar to \eqref{eq_app_tls_def}, the TLS estimate uses $\mt{var}[x_d] = \mt{var}[\tilde x_d] - \mt{var}[\tilde x_d - x_d]$ when the noise is independent. Hence the bias is $\big(\sum_{d=1}^D \mt{var}[x_d] - \mt{var}[\tilde x_d - x_d]\big)^{-1}\sum_{d=1}^D (\mt{cov}[\tilde x_d, \tilde z_d] - \mt{cov}[x_d, z_d])$, which is equal to \eqref{eq_tls_bias}.

%TODO: check "unpublished" for submitted not yet accepted and "in press" for accepted not yet published.

% trigger a \newpage just before the given reference
% number - used to balance the columns on the last page
% adjust value as needed - may need to be readjusted if
% the document is modified later
%\IEEEtriggeratref{8}
% The 'triggered' command can be changed if desired:
%\IEEEtriggercmd{\enlargethispage{-5in}}

% references section

% can use a bibliography generated by BibTeX as a .bbl file
% BibTeX documentation can be easily obtained at:
% http://www.ctan.org/tex-archive/biblio/bibtex/contrib/doc/
% The IEEEtran BibTeX style support page is at:
% http://www.michaelshell.org/tex/ieeetran/bibtex/
%\bibliographystyle{IEEEtran}
% argument is your BibTeX string definitions and bibliography database(s)
%\bibliography{IEEEabrv,../bib/paper}
%
% <OR> manually copy in the resultant .bbl file
% set second argument of \begin to the number of references
% (used to reserve space for the reference number labels box)
% \begin{thebibliography}{1}
% \bibitem{Shell}
% M.~Shell, \emph{How to Use the IEEEtran Latex Class}, Latex Archive Contents, \verb+http://www.ieee.org/conferences_events/+ \verb+conferences/publishing/templates.htm+

% \bibitem{IEEEhowto:kopka}
% H.~Kopka and P.~W. Daly, \emph{A Guide to \LaTeX}, 3rd~ed.\hskip 1em plus
%   0.5em minus 0.4em\relax Harlow, England: Addison-Wesley, 1999.

% \end{thebibliography}

% that's all folks
\end{document}